\definecolor{labelkey}{rgb}{0,.56,.7}
\DeclareMathAlphabet{\pazocal}{OMS}{zplm}{m}{n}   
\newcommand{\Lcal}{\pazocal{L}}
\newtheorem{thm}{Theorem}
\newtheorem{prop}[thm]{Proposition}
\newtheorem{lem}[thm]{Lemma}
\theoremstyle{remark}
\newtheorem*{rem}{\bf Remark}
\newtheorem{defi}{\bf Definition}
\newenvironment{exa}
  {\pushQED{\qed}\exaa}
  {\popQED\endexaa}
\newcommand*{\at}{@}
\newcommand{\nn}{\nonumber}
\def\dg{\dagger}
\def\df{\overset{\mathrm{df}}{=}}
\def\ba{\boldsymbol{\a}}
\newcommand{\ket}[1]{\mathop{|#1\rangle}\nolimits}
\newcommand{\bra}[1]{\mathop{\left<#1\,\right|}\nolimits}
\newcommand{\kbr}[2]{| #1\rangle\!\langle #2 |}
\def\ran{\rangle}
\def\lan{\langle}
\DeclareMathOperator*{\haf}{Haf}
\newcommand{\diff}[2]{\mathrm{d}^#1 #2}             
\newcommand{\dif}[1]{\mathrm{\,d} #1}             
\newcommand{\difff}[1]{\mathrm{\,d}^3 #1}             
\def\a{\alpha}
\def\g{\gamma}
\def\G{\Gamma}
\def\d{\delta}
\def\om{\omega}
\def\Om{\Omega}
\def\s{\sigma}
\def\la{\lambda}
\def\bbZ{\mathbb{Z}}
\def\bbE{\mathbb{E}}
\DeclareSymbolFont{Eulerscripteusm10}{U}{eus}{m}{n}
\DeclareMathSymbol{\euA}{\mathord}{Eulerscripteusm10}{"41}
\DeclareMathSymbol{\euI}{\mathord}{Eulerscripteusm10}{"4A}
\begin{document}

\title[A novel approach to perturbative calculations for a large class of interacting boson theories]{{A novel approach to perturbative calculations for a large class of interacting boson theories}}

\begin{abstract}
We present a method of calculating the interacting $S$-matrix to an arbitrary perturbative order for a large class of boson interaction Lagrangians. The method takes advantage of a previously unexplored link between the $n$-point Green's function and a certain system of linear Diophantine equations. By finding all nonnegative solutions of the system, the task of perturbatively expanding an interacting $S$-matrix becomes elementary for any number of interacting fields, to an arbitrary perturbative order (irrespective of whether it makes physical sense) and for a large class of scalar boson theories. The method does not rely on the position-based Feynman diagrams and promises to be extended to many perturbative models typically studied in quantum field theory. Aside from interaction field calculations we showcase our approach by expanding a pair of Unruh-DeWitt detectors  coupled to Minkowski vacuum to an arbitrary perturbative order in the coupling constant. We also link our result to Hafnian as introduced by Caianiello and present a method to list all $(2n-1)!!$ perfect matchings of a complete graph on $2n$ vertices.
\end{abstract}

\keywords{$S$-matrix, Isserlis' and Wick's theorem, Green's function, Interacting boson theories, Feynman diagrams, Graph automorphism, Diophantine equations, Hafnian, Perfect matchings}

\author{Kamil Br\'adler}

\email{kbradler\at uottawa.ca}

\address{Xanadu, 372 Richmond St W, Toronto, M5V 2L7, Canada}
\address{Department of Mathematics and Statistics, University of Ottawa, Ottawa, Canada}

\maketitle

\thispagestyle{empty}
\allowdisplaybreaks

\section{Introduction}\label{sec:intro}

The calculation of an interacting $S$-matrix is one of the first problems encountered in quantum field theory (QFT)~\cite{schwartz2014quantum} . The majority of physical models must be calculated perturbatively in the coupling constant and there exists a whole calculational industry with one sole purpose: to make the increasingly tedious calculations manageable for a large variety of interaction Lagrangians~\cite{hahn2001generating,cullen2014gosam,belanger2006automatic,shtabovenko2016new,alwall2014automated,tentyukov2000feynman,christensen2009feynrules}.
Selected theories (such as the $\phi^4$ model) were studied even in more detail providing many-loop expansions in terms of the Feynman diagrams including closed expressions for their multiplicity factors~\cite{kleinert2000recursive,kajantie2002simple,palmer2002general}. These works fall into a broader effort of the Feynman diagram enumeration~\cite{bender1976statistical,cvitanovic1978number,hue2012general,hurst1952enumeration,mestre2006generating,brouder2009quantum}, including generalizations to various field theories~\cite{kleinert2002recursivee,pelster2003functional,bachmann2000recursive,pelster2004recursive,pelster2003many,pelster2002functional}.

The purpose of this paper is to develop a different method to construct perturbative contributions which is based on (what seems to be) an unexplored link between Wick's theorem~\cite{wick1950evaluation} (or its statistics equivalent due to Isserlis~\cite{isserlis1918formula}) and a certain system of linear Diophantine equations. Based on this insight, our approach is readily applicable to a large class of boson scalar theories and have a great chance of being generalized to a much larger class of theories typically considered in QFT. It is not our goal to compete with the aforementioned multipurpose packages to perform perturbative calculations. Rather, we hope that our method will offer a new conceptual insight into perturbative calculations where a speedy calculation of scattering amplitudes and multiplicities would be a nice bonus. Indeed, from the practical point of view, the method provides an extremely streamlined and versatile way of calculating the $S$-matrix contributions including multiplicities to a high perturbative order, for any number of interacting fields and for a large class of real scalar bosonic theories without the need to summon Feynman's diagrams   (all position-based Feynman diagrams are automatically obtained  in the Diophantine approach). A strategy to go beyond scalar boson models is outlined as well.

The effortless nature of the presented method is quite surprising due to the fact that solving Diophantine equations has a well-earned reputation of being hard. In more detail, it turns out that the number of ways to simplify an $n$-point Green's function is given by all nonnegative solutions of a linear Diophantine system. This, on the other hand, is a problem equivalent to counting the number of interior points of convex polyhedra -- a major topic in algebraic geometry~\cite{beck2007computing,stanley1997enumerative}.

There are circumstances where the increase of perturbative components does not make sense from the physical point of view. It is well known  that some theories are (super)renormalizable and others are not (depending of the spacetime dimension $d$)~\cite{schwartz2014quantum}. Additionally, infinities of a different type creep even into those theories which are renormalizable because their perturbative expansion is asymptotic~\cite{dyson1952divergence,Lipatov1976ny,zinn1981perturbation}. A typical example is the quantum electrodynamics Lagrangian~\cite{schwartz2014quantum} whose perturbative contributions start to increase at the order of the inverse of the fine-structure constant. Nonetheless, we feel that the generality of the presented method and its potential to go beyond boson theories together with so far unnoticed connection (to the best author's knowledge) of perturbative calculations to algebraic geometry/number theory  makes the method interesting and we return to the potentially fruitful link between QFT perturbative calculations and lattice polyhedra in the last section.

It may seem that the number of real scalar boson theories in QFT that are worth of exploring to any order and for any number of interacting fields is limited. Even if it was true, there exists a number of physical processes where such expansions are desirable. One of them is a model  of a real scalar field linearly coupled to a two-level quantum system known as the Unruh-DeWitt (UDW) particle detector. It was conceived in~\cite{unruh1976notes}, improved~\cite{dewitt1979quantum} and studied in many physical situations~\cite{svaiter1992inertial,higuchi1993uniformly,ver2009entangling,reznik2005violating,schlicht2004considerations,lin2006accelerated,louko2006often,sriramkumar1996finite,barbado2012unruh,cliche2010information,franson2008generation,bradler2016absolutely}. The starting point for perturbative calculations is  the Hamiltonian formalism (equivalent to the Lagrangian approach in the absence of a derivative coupling). But one quickly notices a qualitative difference compared to the perturbative expansions in QFT since the problem does not seem to suffer from the convergence problems~\cite{dyson1952divergence,Lipatov1976ny}. Contrary to the typical situation in QFT, it is shown that the number of perturbative terms grows polynomially with the perturbative order. It is not clear whether an actual non-perturbative solution exists (we leave it as an interesting open problem) but the main result of our calculation based on a Diophantine system of linear equations is the next best thing: an efficiently calculable expansion to an arbitrary order in the coupling constant.

Finally, we explore the connection to the so-called Hafnian introduced in~\cite{caianiello1953quantum} in the context of Dyson's series expansion. This insight enables to apply our Diophantine algorithm to list all $(2n-1)!!$ perfect matchings of a complete graph on $2n$ vertices. Hafnians have recently attracted a lot of attention and their applications go well beyond perturbative QFT~\cite{barvinok2016approximating,rudelson2016hafnians,schultz1992topological,krenn2017quantum}.

The paper is organized as follows. After a brief recollection of the scalar boson theory in Section~\ref{sec:bosonsInteracting} through its Lagrangian formulation  we present the main findings of this paper in Section~\ref{sec:DiophantineSols}. This is accompanied by several examples in Section~\ref{sec:apps} of the perturbative contributions  of the $\phi^4$ and $\phi^3$ theory (for an easy comparison with the published results) and their multiplicities  and one major application which is the perturbative calculation of a pair of  UDW detectors to an arbitrary order in the coupling constant. In Section~\ref{sec:beyondscalars} we discuss the necessary steps in order to generalize the current formalism to more complicated perturbative models in QFT and Section~\ref{sec:discussions} concludes with several open problems.

\section{Boson scalar theories}\label{sec:bosonsInteracting}

An important object of study in interacting QFT is the S-matrix, which is proportional to  the $n$-point interacting Green's  function
\begin{equation}\label{eq:Smatrix}
  \bra{\Om}\mathsf{T}\{\boldsymbol{\phi}(x_1)\dots\boldsymbol{\phi}(x_k)\}\ket{\Om},
\end{equation}
where  $\ket{\Om}$ is the interacting vacuum for the given theory, $\boldsymbol{\phi}(x_i)$ are the interacting fields and $\mathsf{T}$ stands for the time-ordering operator. In the Lagrangian formulation of the theory the way of calculating  Green's  function~\eqref{eq:Smatrix} is through the formula~\cite{schwartz2014quantum}
\begin{equation}\label{eq:GreenInteracting}
  \bra{\Om}\mathsf{T}\{\boldsymbol{\phi}(x_1)\dots\boldsymbol{\phi}(x_k)\}\ket{\Om}
  ={\bra{0_M}\mathsf{T}\{\phi(x_1)\dots\phi(x_k)\exp{[i\int\diff{d}{z}\Lcal_{\mathrm{int}}]\}}\ket{0_M}\over\bra{0_M}\mathsf{T}
  \{\exp{[i\int\diff{d}{z}\Lcal_{\mathrm{int}}]}\}\ket{0_M}},
\end{equation}
where the interacting part of the complete Lagrangian $\Lcal=\Lcal_0+\Lcal_{\mathrm{int}}$ appears, $\phi(x_i)$ are free fields and $\ket{0_M}$ is a free ground state in $d$-dimensional Minkowski spacetime (Minkowski vacuum). The type of theories we will study in this work are the $N$-mode boson scalar theories whose free Lagrangian reads
\begin{equation}\label{eq:freeLagr}
  \Lcal_0=\sum_{k=1}^N{1\over2}\partial^\nu\phi_k\partial_\nu\phi_k-{1\over2}\mu_k^2\phi_k^2
\end{equation}
and the interaction part can take many different forms. Instead of trying to write the most general $\Lcal_{\mathrm{int}}$, we will list several types of interaction (omitting the multiplicative terms making the Lagrangian densities of dimension~$m^d$):
\begin{align}\label{eq:LintListed}
  \Lcal_{\mathrm{int}} & = -\sum_{n=3} g_n{\phi^n\over n!},\\
  \Lcal_{\mathrm{int}} & \propto \prod_{l=1}^Ng\phi_l^{n_l},\\
  \Lcal_{\mathrm{int}} & = -{g\over4}\bigg(\sum_{l=1}^{N}\phi_i\phi_i\bigg)^2.
\end{align}
In the first line we set $N=1$, in the second item we skipped the numerical prefactors and in the third line we have $\mu_k=\mu$ with $(\phi_1,\dots,\phi_N)$ forming a $N$-component boson field exhibiting the $O(N)$ symmetry (the so-called $O(N)$ sigma model). The fields in~\eqref{eq:LintListed} are real but the current analysis is equally applicable to complex fields  with only minor modifications (and for suitable interaction Lagrangians -- we discuss this and other possible generalizations in Section~\ref{sec:beyondscalars}). This is due to the fact that $\phi$ and $\phi^\dg$ must be considered independent. After all, the $O(2)$ sigma model can be rewritten as a one-mode complex theory whose interaction term reads $\Lcal_{\mathrm{int}}=-g/4(\phi^\dagger\phi)^2$.

By looking at the RHS of~\eqref{eq:GreenInteracting}, the solution is given by expanding the numerator around the coupling parameter. The denominator (the free $S$-matrix) is known to factor out and this removes the disconnected contributions of the scattering amplitudes in the numerator. Here comes our main contribution. We present an efficient way of calculating the perturbative expansion of the numerator of~\eqref{eq:GreenInteracting} for arbitrary number $k$ of interacting fields $(\phi(x_i))_{i=1}^k$, for an extensive class of interaction Lagrangians such as those in Eqs.~\eqref{eq:LintListed} and to any perturbative order in the coupling constant. A generic expansion element in the numerator reads
\begin{equation}\label{eq:SmatrixOrderm}
  S^{(m)}={i^m\over m!}
  \int\diff{d}{\boldsymbol{z}}\bra{0_M}\mathsf{T}\big\{\phi(x_1)\dots\phi(x_k)\Lcal_{\mathrm{int}}(z_1)\dots\Lcal_{\mathrm{int}}(z_m)\big\}\ket{0_M}
\end{equation}
and the expectation value (the $k+m$-point Green's function) is the focus of this work. Here we make the perturbative calculations extremely straightforward for \emph{any} $m$ and $k$  without relying on Feynman rules or constructing Feynman position-space diagrams. We show how to efficiently factorize the $k+m$-point Green's function for large $k,m$ into a sum of products of two-point Green's functions (Feynman propagators) with no effort whatsoever. This is typically the most tedious step when dealing with perturbation techniques.

Lagrangians may also contain derivative interactions $\Lcal_{\mathrm{int}}\equiv\Lcal_{\mathrm{int}}(\phi_i,\partial_\nu{\phi_i})$. This is more or less a technical issue despite  the fact that the identity
\begin{equation*}
  \bra{0}\mathsf{T}\{\partial_\mu\phi(x)\partial_\nu\phi(y)\}\ket{0}
  =\partial_\mu\partial_\nu\bra{0_M}\mathsf{T}\{\phi(x)\phi(y)\}\ket{0_M}-g_{\mu0}\d(x'-y')\bra{0_M}\mathsf{T}\{\phi(x)\partial_\nu\phi(y)\}\ket{0_M}
\end{equation*}
complicates the extraction of the derivatives out of the propagator. It was shown in~\cite{rohrlich1950quantum} that the delta function contributions cancel out in the perturbative expansion of the whole $S$-matrix. Hence, in principle, we could use the `essentially' correct identity
$$
\bra{0}\mathsf{T}\{\partial_\mu\phi(x)\partial_\nu\phi(y)\}\ket{0}\ \mbox{`='}\ \partial_\mu\partial_\nu\bra{0_M}\mathsf{T}\{\phi(x)\phi(y)\}\ket{0_M}
$$
and perform the same analysis we will present here.

\section{$n$-point Green's functions solved through a system of linear Diophantine equations}\label{sec:DiophantineSols}

\begin{thm}[Isserlis'~\cite{isserlis1918formula}]\label{thm:Isserlis}
  Let $x_i$ be a Gaussian random variable satisfying $\bbE[\prod_{i=1}^{2m+1}x_i]=0$. Then
  \begin{equation}\label{eq:Isserlis}
    \bbE\Big[\prod_{i=1}^{2m}x_i\Big]=\sum_{r=1}^{(2m-1)!!}\prod_{\genfrac{}{}{0pt}{2}{j,k=1}{j<k}}^{m}\bbE_r[x_{j}x_{k}],
  \end{equation}
  where the sum goes over the products of bivariate expectation values $\bbE_r$.
\end{thm}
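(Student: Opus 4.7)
The plan is to prove Theorem~\ref{thm:Isserlis} through the characteristic function of the jointly Gaussian vector $(x_1,\dots,x_{2m})$. Because all odd joint moments vanish, the vector is centered, and denoting $\Sigma_{jk}=\bbE[x_jx_k]$ its characteristic function has the quadratic-exponential form
\begin{equation*}
\varphi(t_1,\dots,t_{2m})=\bbE\Big[\exp\Big(i\sum_{j=1}^{2m}t_jx_j\Big)\Big]=\exp\Big(-\tfrac{1}{2}\sum_{j,k=1}^{2m}t_jt_k\,\Sigma_{jk}\Big).
\end{equation*}
The joint moment is then recovered via the standard identity
\begin{equation*}
\bbE\Big[\prod_{i=1}^{2m}x_i\Big]=(-i)^{2m}\,\frac{\partial^{2m}\varphi}{\partial t_1\cdots\partial t_{2m}}\bigg|_{t=0},
\end{equation*}
so the task reduces to differentiating $\varphi$ and evaluating at the origin.

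Expanding the exponential as a power series, the $n$-th term is a homogeneous polynomial in $t$ of degree $2n$, so the only one that survives a $2m$-fold derivative at zero is the $n=m$ term
\begin{equation*}
\frac{(-1/2)^m}{m!}\Big(\sum_{j,k=1}^{2m}t_jt_k\Sigma_{jk}\Big)^{m}.
\end{equation*}
Writing the $m$-th power as a sum over ordered $m$-tuples of index pairs $\bigl((j_1,k_1),\dots,(j_m,k_m)\bigr)$, the derivative $\partial_{t_1}\cdots\partial_{t_{2m}}$ evaluated at $t=0$ singles out exactly those tuples whose concatenated index list $j_1,k_1,\dots,j_m,k_m$ is a permutation of $1,\dots,2m$. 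Collecting these terms by the underlying unordered set of pairs reorganises the sum as a sum over perfect matchings $\pi=\{\{a_l,b_l\}\}_{l=1}^{m}$ of $\{1,\dots,2m\}$, each of which will produce the product $\prod_{l=1}^{m}\Sigma_{a_lb_l}=\prod_{l=1}^{m}\bbE[x_{a_l}x_{b_l}]$ required by \eqref{eq:Isserlis}.

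The only real step is a combinatorial accounting: each perfect matching $\pi$ is realised by exactly $2^m\,m!$ ordered tuples ($m!$ orderings of the $m$ pairs, two orientations per pair, using $\Sigma_{jk}=\Sigma_{kj}$), and this integer must cancel the analytic prefactor $(-i)^{2m}(-1/2)^m/m!=1/(2^m m!)$ to leave every matching with multiplicity exactly one. This cancellation is the point where I would be most careful, because any off-by-a-factor-of-two slip propagates to a wrong overall coefficient. Since the number of perfect matchings of $2m$ points is $(2m-1)!!$, carrying out this cancellation delivers Theorem~\ref{thm:Isserlis} exactly in the form stated. As a crosscheck one gets $\bbE[x_1x_2]=\Sigma_{12}$ for $m=1$ and the three-term sum $\Sigma_{12}\Sigma_{34}+\Sigma_{13}\Sigma_{24}+\Sigma_{14}\Sigma_{23}$ for $m=2$; these are the smallest instances against which the combinatorial normalisation can be confirmed. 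An alternative inductive route would use Stein's identity $\bbE[x_1F(x)]=\sum_k\Sigma_{1k}\bbE[\partial_{x_k}F(x)]$ with $F=x_2\cdots x_{2m}$ to obtain the recursion $\bbE[x_1\cdots x_{2m}]=\sum_{k=2}^{2m}\Sigma_{1k}\,\bbE[x_2\cdots\widehat{x_k}\cdots x_{2m}]$, matching exactly the recursion for perfect matchings that pair index $1$ with some $k$; the obstacle there is a clean justification of Gaussian integration by parts when $\Sigma$ may be degenerate, whereas the characteristic-function route handles this transparently.
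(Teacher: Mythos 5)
Your argument is correct, but there is nothing in the paper to compare it against: Theorem~\ref{thm:Isserlis} is quoted as a classical result (Isserlis~\cite{isserlis1918formula}, Wick~\cite{wick1950evaluation}) and the paper supplies no proof of it, reserving its own arguments for the refinement in Theorem~\ref{thm:IsserlisRefined} and the multiplicity counting. What you give is the standard characteristic-function proof, and the delicate step is handled correctly: only the degree-$2m$ term of the expansion of $\exp\big(-\tfrac12\sum t_jt_k\Sigma_{jk}\big)$ survives the $2m$-fold derivative at the origin, each perfect matching is hit by exactly $2^m m!$ ordered tuples (orderings of pairs times orientations, using symmetry of $\Sigma$), and this cancels the prefactor $(-i)^{2m}(-1/2)^m/m!=1/(2^m m!)$, so every matching enters with coefficient one and the sum has $(2m-1)!!$ terms as stated. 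Two small remarks. First, the paper's hypothesis is phrased loosely; your reading of it as ``$(x_1,\dots,x_{2m})$ jointly Gaussian and centered'' is the intended one, and it is what the quadratic-exponential form of $\varphi$ actually requires (a single vanishing $(2m+1)$-fold moment would not by itself give centering). Second, your alternative route via the recursion $\bbE[x_1\cdots x_{2m}]=\sum_{k\ge2}\Sigma_{1k}\,\bbE[x_2\cdots\widehat{x_k}\cdots x_{2m}]$ is closer in spirit to Wick's original operator-contraction argument and to how the paper later uses the result (pairing field $1$ with each possible partner), but the characteristic-function route you chose is self-contained and avoids any degeneracy issues with $\Sigma$.
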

\begin{rem}[notational]
  In our case the Gaussian random variable will be a product of time-ordered free scalar fields $\phi(x_i)$ and $\phi(z_j)$ and the expectation value will be taken w.r.t. to Minkowski vacuum $\ket{0_M}$:
  \begin{equation}\label{eq:notation}
     \bbE\Big[\prod_{i=1}^{2m}x_i\Big]=\bra{0_M}\mathsf{T}\big\{\prod_{i=1}^{2m}\phi_{i}\big\}\ket{0_M}\equiv\lan\prod_{i=1}^{2m}i\ran_0,
  \end{equation}
  where  $\phi_{i}\equiv\phi(x_i)$ (or $\phi(z_i)$) and on the RHS is the minimalist notation we will be mostly using.
\end{rem}
\begin{rem}
  The theorem is also known as Wick's theorem~\cite{wick1950evaluation}. Wick's theorem transforms time-ordered operator expressions into a normal form~\cite{schwartz2014quantum} and Isserlis' result is recovered upon taking a (free) vacuum expectation value. As a matter of fact, Isserlis' theorem is more general since naturally there is no notion of time-ordering  in Eq.~(\ref{eq:Isserlis}) and so the theorem applies even for `ordinary' products of free scalar fields. Of course, we are actually calculating the time ordered version as it appears in the definition of the sought Green's function.
\end{rem}
For $2m$ different random variables $x_i$ (scalar fields $\phi_i$) in Eq.~(\ref{eq:notation}) there is nothing else to say but Isserlis' theorem can be refined if the number of \emph{different} variables $x_i$  in~\eqref{eq:Isserlis} is less than $2m$.
\begin{defi}[\cite{stanley1997enumerative}]\label{def:compositions}
  A $k$-composition of $n\in\bbZ_{>0}$ is $n$ written as an ordered sum of $k$ strictly positive integers.
\end{defi}
\begin{rem}
  Order matters unlike for integer partitions. There are $\binom{n-1}{k-1}$ $k$-compositions of $n$. For example, for $n=4$, there are three 2-compositions: $1+3,2+2$ and $3+1$.
\end{rem}
\begin{defi}\label{def:lexi}
  Given a finite set of positive integers $S=(1,2,\dots,f)$ we define the lexicographic ordering on the subset of two elements of $i,j\in S$ as $ij\leq i'j'$ iff $i\leq i'$ or $i=i'$ together with $j\leq j'$.
\end{defi}
\begin{thm}\label{thm:IsserlisRefined}
  Let $\ell_i\in\bbZ_{\geq0}$ such that $\sum_{i=1}^f\ell_i=2m$. Then,  Green's function
    \begin{equation}\label{eq:GreensFcns}
      \bra{0_M}\mathsf{T}\big\{\prod_{i=1}^{\ell_1}\phi_{1}\prod_{i=1}^{\ell_2}\phi_{2}\dots\prod_{i=1}^{\ell_f}\phi_{f}\big\}\ket{0_M}
      \equiv\lan1^{\ell_1}2^{\ell_2}\dots f^{\ell_f}\ran_0
      =\lan\underbrace{1\dots1}_{\ell_1}\underbrace{2\dots2}_{\ell_2}\dots\underbrace{f\dots f}_{\ell_f}\ran_0
    \end{equation}
  can be written as a product of two-point correlation functions
  \begin{equation}\label{eq:GreensProduct}
    \lan1^{\ell_1}2^{\ell_2}\dots f^{\ell_f}\ran_0
    =\sum_{\ba}\mu_{\ba}\prod_{\genfrac{}{}{0pt}{2}{i,j=1}{i<j}}^{f}\lan ij\ran^{\a_{ij}}_0,
  \end{equation}
  where $\mu_{\ba}\in\bbZ_{>0}$ is the product multiplicity, $\a_{ij}\in\bbZ_{\geq0}$ and  $\ba\df(\a_{ij})_{1\leq i\leq j\leq f}$. The number of products is a polynomial function of $\ell_i$  for a fixed number of fields $f$.
\end{thm}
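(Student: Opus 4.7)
The plan is to apply Theorem~\ref{thm:Isserlis} after temporarily attaching distinct labels to the $2m$ field copies on the left-hand side of~\eqref{eq:GreensFcns}. First I would write the Green's function as a sum over the $(2m-1)!!$ perfect matchings of these labelled positions; each matching contributes a product of $m$ two-point vacuum expectation values $\lan\phi_a\phi_b\ran_0$. Crucially, $\lan ij\ran_0$ depends only on the \emph{types} $i,j$ of the paired fields and not on which particular copies are chosen. Consequently every matching produces a monomial of the form $\prod_{i\leq j}\lan ij\ran_0^{\a_{ij}}$, whose exponent vector $\ba=(\a_{ij})_{1\leq i\leq j\leq f}$ records how many times each pair of types appears.

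Next I would characterise which $\ba$ actually arise. Since each of the $\ell_k$ copies of $\phi_k$ must be matched exactly once, a degree count on the associated multigraph (with self-loops allowed) forces the linear Diophantine system
\begin{equation*}
  2\a_{kk}+\sum_{j\neq k}\a_{kj}=\ell_k,\qquad k=1,\dots,f,
\end{equation*}
under the symmetry convention $\a_{jk}=\a_{kj}$ for $j>k$. Conversely, every nonnegative integer solution is realised by at least one matching, and the number of matchings compatible with a prescribed $\ba$ is the multinomial-type count
\begin{equation*}
  \mu_{\ba}=\frac{\prod_{k=1}^{f}\ell_k!}{\prod_{k=1}^{f}2^{\a_{kk}}\a_{kk}!\,\prod_{i<j}\a_{ij}!},
\end{equation*}
obtained by first partitioning the $\ell_k$ copies of each type according to the pair-classes they will feed and then matching them internally (with the usual $2^{\a_{kk}}\a_{kk}!$ redundancy for the self-loops). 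This supplies~\eqref{eq:GreensProduct} together with $\mu_{\ba}\in\bbZ_{>0}$.

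For the polynomial statement I would interpret the admissible $\ba$ as the lattice points of the parametric polytope
\begin{equation*}
  P(\boldsymbol{\ell})=\Big\{\ba\in\bbR_{\geq0}^{\binom{f+1}{2}}:A\ba=\boldsymbol{\ell}\Big\},
\end{equation*}
where $A$ is the fixed $f\times\binom{f+1}{2}$ incidence matrix read off from the constraints above. For $f$ held constant the polytope has dimension $\binom{f}{2}$, so its lattice-point count is a vector partition function in $\boldsymbol{\ell}$; by Ehrhart's theorem for parametric polytopes~\cite{beck2007computing,stanley1997enumerative} it is a quasi-polynomial in the $\ell_i$ of degree $\binom{f}{2}$, hence genuinely polynomial on each residue class modulo the relevant periods (which are themselves determined only by $f$). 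The main obstacle I anticipate is exactly this last step: the combinatorial input (Isserlis' identity, the degree equations, and the multinomial book\-keep\-ing) is routine, whereas upgrading the solution count from a quasi-polynomial to the cleaner polynomial statement in~\eqref{eq:GreensProduct} requires either invoking the Ehrhart/vector-partition machinery in earnest or setting up a direct induction on $f$ that tracks the parity constraints explicitly.
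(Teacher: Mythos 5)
Your proposal is correct, but it proves the theorem by a partly different route than the paper. The paper's proof of this theorem does only two things: it identifies the admissible conformation exponents with the nonnegative solutions of the system $2\a_{ii}+\sum_{j\neq i}\a_{ij}=\ell_i$, and it bounds their number by an elementary count of zero-padded $k$-compositions, $\sum_{k}\binom{f-1}{k}\binom{\ell-1}{k-1}=\binom{f-2+\ell}{f-2}$, summed over the allowed right-hand sides to get a polynomial upper bound of fixed degree in $\ell=\max_i\ell_i$; it then describes an explicit lexicographic enumeration procedure, and it postpones the multiplicities entirely to a separate theorem, where $\mu_{\ba}=\G\prod_{i<j}\g_{ij}$ is built up by a sequential depletion argument. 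You instead get the multiplicity at once via the closed multinomial formula $\mu_{\ba}=\prod_k\ell_k!\big/\big(\prod_k 2^{\a_{kk}}\a_{kk}!\prod_{i<j}\a_{ij}!\big)$, which is equivalent to the paper's $\G\prod\g_{ij}$ (it reproduces, e.g., the values $144,96,\dots$ in the $\lan1^12^13^44^4\ran_0$ example and sums to $(2m-1)!!$), so your argument simultaneously yields the content of the paper's multiplicity theorem in a cleaner form; and for the counting statement you invoke Ehrhart/vector-partition machinery, whereas the paper stays elementary. On the last point your caution is well placed but not a defect relative to the paper: the exact count is indeed only a quasi-polynomial in the $\ell_i$ (the paper's own remark, with $\mathsf{e}(\ell)$ valid for even $\ell$, confirms this), and the paper's proof in fact establishes only a polynomial \emph{upper bound}, which is how the final sentence of the statement should be read; your composition-free route through parametric polytopes buys a sharper structural description of the count, while the paper's route buys an explicit, implementable enumeration of all solutions.
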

\begin{defi}\label{def:coformations}
  A product $\prod_{\genfrac{}{}{0pt}{2}{i,j=1}{i<j}}^{f}\lan ij\ran^{\a_{ij}}_0$~is called a~\emph{conformation} and $\ba$ a \emph{conformation exponent}.
\end{defi}
\begin{rem}
  The conformation exponent $\ba$ will always be lexicographically ordered but it is not necessary for the proof of the above theorem.
\end{rem}
\begin{proof}[Proof of Theorem~\ref{thm:IsserlisRefined}]
  To solve Eq.~(\ref{eq:GreensProduct}) means to find all  nonnegative solutions $\a_{ij}$ of the system
    \begin{equation}\label{eq:DiophantGeneral}
      2\a_{ii}+\sum_{\genfrac{}{}{0pt}{2}{j=1}{j\neq i}}^f\a_{ij}=\ell_{i},
    \end{equation}
    where $1\leq i\leq f$. It is a system of linear Diophantine equations and they have zero, one or infinitely many solutions (counting both positive and negative ones). In this case, the numerical coefficients for the variables $\a_{ij}$ are  simple  so we can easily guess a solution for $i=1$ in~\eqref{eq:DiophantGeneral} (say $\a_{12}=\a_{13}=\hdots=\a_{1\ f-1}=0$ and so $\a_{1f}=\ell_1-2\a_{11}$ which has infinitely many solutions). It also means that there are finitely many nonnegative solutions $\a_{ij}$ we are looking for. Let us prove the second statement first and polynomially upper bound the number of non-negative solutions. By setting $\ell=\max_i{\ell_i}$, the first row of~\eqref{eq:DiophantGeneral} becomes
    \begin{equation}\label{eq:DiophantGeneral1row}
      \a_{12}+\a_{13}+\dots+\a_{1f}=\ell-2\a_{11}.
    \end{equation}
    Let us also set $a_{11}=0$ for the moment. The task of finding the number of nonnegative solutions resembles the problem of finding the number of $k$-compositions of $\ell$ (see Definition~\ref{def:compositions}). Indeed, if we take $1\leq k\leq f-1$ then we only need to know how to distribute $\binom{\ell-1}{k-1}$ $k$-composition in $(f-1)$ `slots'. Or, put differently, how many ways we can pad a $k$-composition with zeros. This equals to $\binom{f-1}{k}\binom{\ell-1}{k-1}$ and by summing over all $k$-compositions we get
    \begin{equation}\label{eq:paddedCompositions}
      \sum_{k=1}^{f-1}\binom{f-1}{k}\binom{\ell-1}{k-1}=\binom{f-2+\ell}{f-2}.
    \end{equation}
    We have to add all possible $\ell-2\a_{11}$ on the RHS of~\eqref{eq:DiophantGeneral1row}. Since $\a_{11}\in\bbZ_{\leq0}$, the parameter $\ell$ decreases by the multiples of two so we will need to distinguish between odd and even $\ell$ and sum only the appropriate ones. But we are interested in an upper bound so let's pretend $\ell$ can be any positive integer and just calculate
    \begin{equation}\label{eq:eq:paddedCompositionsAllEll}
      \sum_{\tilde\ell=0}^\ell\binom{f-2+\tilde\ell}{f-2}={1+\ell\over f-1}\binom{f-1+\ell}{f-2}
      ={1\over(f-1)!}\prod_{j=1}^{f-1}(\ell+j)={1\over(f-1)!}p(\ell),
    \end{equation}
    where $p(\ell)$ is a polynomial of degree $f-1$. There is $f$ linear equations in system~\eqref{eq:DiophantGeneral}. Each consecutive equation has one independent variable less than the previous one but if we assume for a while that all $f$ equations contain~$f$ independent variables then the number of solutions is upper bounded by a polynomial of degree $(f-1)^f$.

    How do we obtain all nonnegative solutions in a systematic way?  Since $0\leq\a_{ii}\leq\lfloor\ell_i/2\rfloor$ and $0\leq\a_{ij}\leq\min{\{\ell_i,\ell_j\}}$ for $i\neq j$ we start in the first equation of system~\eqref{eq:DiophantGeneral} ($i=1$) by
    fixing the lowest possible values of $\a_{11},\a_{12}$ up to $\a_{1\ f-2}$ ($\a_{11}=\a_{12}=\hdots\a_{1\ f-2}=0$) and simply list all the admissible $\a_{1j}$'s for $f-2<j\leq f$. We continue by increasing $\a_{1\ f-2}$ by one and repeat the procedure until we hit $\min{[\ell_1,\ell_2]}$. Then we repeat the whole process for $\a_{1\ f-3}$ all the way to $\a_{11}$. In the next step we move to~\eqref{eq:DiophantGeneral} for $i=2$ by inserting all found $f$-tuples $(\a_{1j})_{1\leq j\leq f}$ and repeat the procedure until we find all admissible solutions in the second row. As a result we obtain an $(2f-1)$-tuple $(\a_{1j},\a_{2j'})_{1\leq j\leq f,2\leq j'\leq f}$. As it is clear from~\eqref{eq:DiophantGeneral} the second row `inherits' one variable ($\a_{12}$) from the first row. We continue in a similar fashion for all $i$. The result is a complete list of $f(f+1)/2$-tuples $\ba$ thus solving Eqs.~(\ref{eq:DiophantGeneral}).
\end{proof}
\begin{rem}
    Note that the number of nonnegative solutions is gruesomely overestimated. One can convert the second part of the proof of Theorem~\ref{thm:IsserlisRefined} into a program that systematically finds all the solutions. Essentially, if it takes one time step to find the first solution then all the $t$ solutions can be find in $t$ time steps where $t$ increases polynomially. There is really no need for solving a Diophantine system by some sophisticated number-theoretic methods since due to the simple form of~(\ref{eq:DiophantGeneral}) we get \emph{all} nonnegative solutions by inspection and only need to list them (i.e., save them to the memory).

    A related problem is how to get a closed form for the number of nonnegative solutions for given $\ell_i$ and $f$ without actually listing and counting the solutions. This is a  nontrivial task attempted by the author in a very special case of $\ell_i=\ell$  and $f=4$~\cite{bradler2016dio} (the chosen values have no relevance to perturbative QFT). It is a problem studied by the Ehrhart  theory~\cite{ehrhart1962} where the interior points of convex lattice polyhedra are counted~\cite{beck2007computing}. The Ehrhart theory does not provide a closed expression for the number of lattice points, however, and, typically, computer algebra systems are used to count the interior points. The connection to convex geometry comes from rewriting system~\eqref{eq:DiophantGeneral} as a system of inequalities
    $$
    \sum_{\genfrac{}{}{0pt}{2}{j=1}{j\neq i}}^f\a_{ij}\leq\ell_{i}
    $$
    which is an algebraic definition of a convex polyhedron.
\end{rem}
To get the conformation multiplicity factor $\mu_{\ba}$ in Theorem~\ref{thm:IsserlisRefined} we present an auxiliary lemma.
\begin{lem}\label{lem:fcnCardinality}
  Let $S$ and $T$ be discrete sets where $|S|=s,|T|=t$. Then there is
  \begin{equation}\label{eq:fcnCardinality}
    \binom{s}{n}\,t\times\ldots\times(t-n+1)
  \end{equation}
  bijective functions $f:X\mapsto Y$ where $X\subset S,Y\subset T$ and $|X|=|Y|=n$ where $0<n\leq\min{\{s,t\}}$. For $n=0$ we set Eq.~(\ref{eq:fcnCardinality}) to one.
\end{lem}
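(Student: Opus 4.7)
The plan is to count the triples $(X,Y,f)$ by a two-stage decomposition. First I would pick the domain $X\subset S$ with $|X|=n$, giving $\binom{s}{n}$ options. Then, for fixed $X$, I would observe that specifying a bijection $f:X\to Y$ with $Y\subset T$ and $|Y|=n$ is exactly the same data as specifying an injection $X\hookrightarrow T$, since $Y=f(X)$ is recovered from $f$ and does not require an independent choice. Counting injections in the standard way---fix any enumeration $x_1,\ldots,x_n$ of $X$ and pick $f(x_i)\in T\setminus\{f(x_1),\ldots,f(x_{i-1})\}$, contributing $t, t-1,\ldots,t-n+1$ options respectively---yields the falling factorial $t(t-1)\cdots(t-n+1)$, and multiplying the two stages reproduces the claimed expression.

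As a consistency check I would also perform the count in the dual order: pick $X$ ($\binom{s}{n}$ ways), pick $Y$ as an independent $n$-subset of $T$ ($\binom{t}{n}$ ways), and then pick a bijection between two labelled $n$-sets ($n!$ ways). The product $\binom{s}{n}\binom{t}{n}n!$ collapses to $\binom{s}{n}\,t(t-1)\cdots(t-n+1)$ via the elementary identity $\binom{t}{n}n! = t!/(t-n)!$, confirming the first derivation. The boundary case $n=0$ is handled by the standard convention that the empty product equals one, which matches the unique empty bijection $\emptyset\to\emptyset$ and is consistent with the statement.

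The lemma is elementary enough that I do not foresee a real obstacle. The one subtlety worth flagging explicitly in the write-up is that in the first decomposition $Y$ is not chosen independently but is determined as $\mathrm{im}(f)$, while in the second decomposition $Y$ is selected separately and then matched to $X$ by one of $n!$ bijections; both bookkeeping conventions must yield the same count, and they do, which rules out any concern about double-counting.
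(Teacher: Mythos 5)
Your proposal is correct and essentially matches the paper's argument: the paper also chooses the domain $X$ in $\binom{s}{n}$ ways and then counts $\binom{t}{n}\,n!$ choices of codomain-plus-bijection, which is exactly the dual bookkeeping you include as your consistency check and is identical to your injection count via $\binom{t}{n}n!=t(t-1)\cdots(t-n+1)$. Nothing further is needed.
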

\begin{rem}
  If $s=t=n$ then~(\ref{eq:fcnCardinality}) becomes $n!$ which is  known to be the number of bijections from a set to itself (the number of permutations). To make the notation more concise in following text we will use the definition of the falling factorial: $[m]_n\df m\times\ldots\times (m-n+1)$. So Eq.~(\ref{eq:fcnCardinality}) becomes $\binom{s}{n}[t]_n$.
\end{rem}
\begin{proof}
  The coefficient $\binom{s}{n}$ is simply a number of all possible domains $X\subset S$ whose cardinality is $n$. Then for every domain $X$ there is $\binom{t}{n}n!$ codomains $Y\subset T$ of the same cardinality. The  coefficient $n!$ comes from the number of permutations within each of $\binom{t}{n}$ codomains $Y$.
\end{proof}
\begin{thm}\label{thm:Multiplicity}
  The multiplicity factor $\mu_{\ba}$ for a given conformation exponent $\ba$ reads
  \begin{equation}\label{eq:multFactor}
    \mu_{\ba}=\G\,\prod_{\genfrac{}{}{0pt}{2}{i,j=1}{i<j}}^{f}\g_{ij},
  \end{equation}
  where
  \begin{subequations}\label{eq:Gammas}
    \begin{align}
      \G & = \prod_{i=1}^f\binom{\ell_i}{2\a_{ii}}\prod_{i=1}^f(2\a_{ii}-1)!!,\label{eq:GammasGAMMA}\\
      \g_{ij} & = \binom{\ell_i-2\a_{ii}-\sum_{m=1}^{i-1}\a_{mi}-\sum_{n=i+1}^{j-1}\nolimits\a_{in}}{\a_{ij}}\big[\ell_j-2\a_{jj}-\sum_{m=1}^{i-1}\a_{mj}\big]_{\a_{ij}},\label{eq:Gammasgaij}
    \end{align}
  \end{subequations}
  where $[\bullet]_{\a_{ij}}$ denotes the falling factorial introduced in the remark below Lemma~\ref{lem:fcnCardinality}.
\end{thm}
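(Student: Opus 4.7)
The plan is to build $\mu_{\ba}$ by a sequential counting argument that mirrors the way Isserlis' theorem constructs perfect matchings. The conformation exponent $\ba$ specifies, for each field $\phi_i$, that $2\a_{ii}$ of its $\ell_i$ copies are paired among themselves and $\a_{ij}$ of them are paired with copies of $\phi_j$ for each $j\neq i$. Hence $\mu_{\ba}$ is exactly the number of distinct labelled perfect matchings of the $2m$ positions that realize this pattern, and our task reduces to counting those matchings by making choices in a fixed order and invoking Lemma~\ref{lem:fcnCardinality} at each step.

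First I would handle all self-pairings. For each $i$, choose which $2\a_{ii}$ of the $\ell_i$ copies of $\phi_i$ are self-contracted; this contributes $\binom{\ell_i}{2\a_{ii}}$. Pair those $2\a_{ii}$ copies among themselves in one of $(2\a_{ii}-1)!!$ ways. Taking the product over $i$ yields $\Gamma$ as in Eq.~\eqref{eq:GammasGAMMA}, and leaves $\ell_i-2\a_{ii}$ free copies of $\phi_i$ awaiting a cross-partner.

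Next I would process the cross-pairings $(i,j)$ with $i<j$ in the lexicographic order of Definition~\ref{def:lexi}. When $(i,j)$ is reached, the $\phi_i$-copies already consumed by previously-processed pairs are exactly those in pairs $(m,i)$ with $m<i$ (using $\a_{mi}$ copies each) and pairs $(i,n)$ with $i<n<j$ (using $\a_{in}$ each); hence the number of $\phi_i$-copies still available is $\ell_i-2\a_{ii}-\sum_{m=1}^{i-1}\a_{mi}-\sum_{n=i+1}^{j-1}\a_{in}$. For $\phi_j$, only the pairs $(m,j)$ with $m<i$ have touched it, so the number of unused $\phi_j$-copies is $\ell_j-2\a_{jj}-\sum_{m=1}^{i-1}\a_{mj}$. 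Applying Lemma~\ref{lem:fcnCardinality} with these two values as $s$ and $t$ and with $n=\a_{ij}$, the choice of $\a_{ij}$ source positions among $\phi_i$-copies contributes the binomial factor while the bijection onto $\a_{ij}$ target positions among $\phi_j$-copies contributes the falling factorial; together they give exactly $\g_{ij}$ of Eq.~\eqref{eq:Gammasgaij}.

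Multiplying the step 1 factors and the step 2 factors in lexicographic order produces Eq.~\eqref{eq:multFactor}. Two things remain to check: (a) by the Diophantine constraint~\eqref{eq:DiophantGeneral}, when the last pair $(f-1,f)$ is processed no copies of any field are left over, so the binomial and falling-factorial arguments remain non-negative throughout; and (b) each labelled perfect matching of the prescribed type is counted exactly once, because the lex order on pairs together with the choose-then-biject recipe at each $(i,j)$ associates to it a unique sequence of decisions. The main conceptual hurdle is precisely the bookkeeping in (a)--(b) — correctly identifying, at each pair $(i,j)$, which of the earlier pairs have already consumed copies of $\phi_i$ and of $\phi_j$; once that is pinned down, the rest is a direct application of Lemma~\ref{lem:fcnCardinality}.
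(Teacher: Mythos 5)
Your proposal is correct and follows essentially the same route as the paper's own proof: self-contractions are counted first to give $\G$ via Lemma~\ref{lem:fcnCardinality} (with the $(2\a_{ii}-1)!!$ factor from Isserlis' theorem), and the cross-pairings are then processed in the lexicographic order of Definition~\ref{def:lexi}, with exactly the same bookkeeping of which earlier pairs deplete the source pool of $\phi_i$-copies and the target pool of $\phi_j$-copies, yielding $\g_{ij}$. Your explicit remarks (a)--(b) on consistency with the Diophantine constraint and on counting each matching once make the argument, if anything, slightly more complete than the paper's.
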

\begin{proof}
  We will use repeatedly Lemma~\ref{lem:fcnCardinality} by identifying $\lan ij\ran_0^{\a_{ij}}$ for $i\neq j$ from Theorem~\ref{thm:IsserlisRefined} by setting:
  \begin{align}
    n & =\a_{ij}, \\
    s & =\ell_i,\\
    t & =\ell_j.
  \end{align}
  Set $S$ is the set of all $i$'s and $T$ is the set of all $j$'s. However, if we calculate more than one two-point Green's function (like in Theorem~\ref{thm:IsserlisRefined}) we have to take into account the fact that the sets $S$ and $T$ might have shrunk. This depends on whether in the preceding Green's function $\lan kl\ran_0^{\a_{kl}}$ we had $k=i$ or $l=j$. The strategy we will follow here is to first count the multiplicity of $\lan ii\ran_0^{\a_{ii}}$ as they are independent (meaning non-overlapping for different $i$'s) and then the multiplicities of $\lan ij\ran_0^{\a_{ij}}$. In that case the derivation follows the lexicographic ordering, Def.~\ref{def:lexi}, which is a handy tool here.

  The multiplicity of  $\lan ii\ran_0^{\a_{ii}}$ immediately follows from Isserlis' theorem. Looking at Eq.~(\ref{eq:Isserlis}) we see that if $x_i=x_j$ there will be $(2m-1)!!$ identical products. Hence ($m$ is $\a_{ii}$ here)
  \begin{equation}\label{eq:selfMultiplicity}
    \g_{ii}=\binom{\ell_i}{2\a_{ii}}(2\a_{ii}-1)!!,
  \end{equation}
  as follows from Eq.~(\ref{eq:fcnCardinality})\footnote{Indeed, the double factorial follows either from Theorem~\ref{thm:Isserlis} as stated or from the counting done in Lemma~\ref{lem:fcnCardinality}.}. The factor of two in the binomial `denominator' accounts for the two $i$'s in  $\lan ii\ran_0$. Repeating this procedure for all $i$ we get $\G$ in Eq.~(\ref{eq:GammasGAMMA}). To get $\g_{ij}$ we then have to keep track of the set cardinality and this is the point where the used lexicographic order becomes important. We follow the ordered set of $ij$ where $1\leq i<j\leq f$ (note the sharp inequality). Hence, $ij=(12,\dots,1f,23,\dots,ff)$ and we find all $\g_{ij}$ where $i\neq j$. The first one is $\g_{12}$ and so we set $n=\a_{12}$ (using the notation of Lemma~\ref{lem:fcnCardinality}) and notice that the cardinality of the discrete set of `ones' is decreased by those contributing to the previously analyzed case $i=j$, namely $\g_{11}$. So $s=\ell_1-2\a_{11}$. Similarly, the cardinality of the `twos' is decreased by the number of elements contributing to $\g_{22}$. So $t=\ell_2-2\a_{22}$ and $\g_{12}=\binom{\ell_1-2\a_{11}}{\a_{12}}[\ell_2-2\a_{22}]_{\a_{12}}$ follows. At this point we can proceed and construct a generic $\g_{ij}$. The `source pool' of $i$'s will be depleted by three contributions: $2\a_{ii},\sum_{m=1}^{i-1}\nolimits\a_{mi}$ and $\sum_{n=i+1}^{j-1}\nolimits\a_{in}$. The bounds of the sums are chosen such that only the preceding contributions to the sought $\g_{ij}$ are accounted for. Similarly, the `target pool' of $j$'s is depleted by $2\a_{jj}$ and $\sum_{m=1}^{i-1}\a_{mj}$, but the sum bounds only choose those contributions preceding  $\g_{ij}$ (using the lexicographic ordering). We notice an interesting asymmetry, where in the falling factorial part of~\eqref{eq:Gammasgaij} there is only one sum. This is because all (negative) contributions $\g_{jl}$ come after any $\g_{ij}$ given the lexicographic ordering and $\g_{jj}$ was taken care of separately.
\end{proof}
\begin{rem}
    We again emphasize the importance of the lexicographic ordering by which we followed the construction of $\g_{ij}$'s. Other orderings are certainly plausible and have to provide the same $\mu_{\ba}$, Eq.~\eqref{eq:multFactor}. But we believe the lexicographic ordering and the construction based on it is the most intuitive one.
\end{rem}
Having a decomposition into a product of two-point Green's functions obtained through Theorem~\ref{thm:IsserlisRefined} we can proceed as usual. The position-based Feynman diagrams can be immediately read off from the products of Green's functions. However, even different Diophantine solutions (different Green's products) lead to the same Feynman diagram, that is, to the same physical processes. This is typical for real scalar theories since when we permute the internal degrees of freedom it is often the same physical process (the same integral in the $S$-matrix). However, when charges are involved, like for complex fields, the internal lines are directed and we may need to distinguish among the Diophantine solutions. For more, see Sections~\ref{sec:apps} and~\ref{sec:beyondscalars}. At this point it is advantageous to express our results in the language of graph theory which leads to Feynman diagrams. This is just an advantageous way of presenting the results and not a starting point as it typically  is in QFT. The connection between Feynman diagrams and graph theory is well-documented~\cite{nakanishi1971graph} and here we recall a few basic concepts and results~\cite{kreher1998combinatorial,harary2014graphical}.
\begin{defi}\label{def:graph}
  A labeled undirected graph $G=(V,E)$ is a set of vertices $V$ and a set $E$ of unordered pairs of elements of $V$ called edges. A graph is called simple if no loops and no multiple edges connecting the same pair of vertices are allowed. Graphs allowing both loops and multiple edges are called pseudographs.
\end{defi}
The following definition of graph isomorphism is split into two: the standard one~\cite{kreher1998combinatorial} and a version upgraded by a requirement on the labeling, see~\cite[Chapter~1]{harary2014graphical}. We will need both definitions.
\begin{defi}\label{def:graphIsoAuto}
A type-1 graph isomorphism $\euI_1(G_1,G_2)$  is a bijection $\euI_1:V_1\mapsto V_2$  such that $\euI_1(E_1)=E_2$. A type-2 graph isomorphism $\euI_2(G_1,G_2)$  is a bijection $\euI_2:V_1\mapsto V_2$  such that $\euI_2(E_1)=E_2$ while also preserving  the labeling. An automorphism of a labeled graph $G$ is defined as $\euA(G)=\euI_2(G,G)$. The set of all automorphisms of $G$ forms the group $\mathsf{Aut}{[G]}$.
\end{defi}
If we restrict ourselves to boson scalar theories then a position-based Feynman diagram is a labeled undirected pseudograph $G=(V,E)$ (we will omit the prefix pseudo- and all other adjectives from now on unless we need them to avoid confusion). The main observation here is that the conformation exponent $\ba=(\a_{ij})_{1\leq i\leq j\leq f}$ from Theorem~\ref{thm:IsserlisRefined} generates $G$ where $|V|=f$ and $0<\a_{ij}\in E$. Let us summarize in
\begin{defi}
  We call $G_{\ba}$ an $\ba$-generated labeled undirected graph if every $\a_{ij}\neq0$ (from Theorem~\ref{thm:IsserlisRefined}) denotes $\a_{ij}$ edges connecting two vertices $i$ and $j$. We split the edges and vertices into external and internal ones according to the details of the studied boson theory, calculated perturbative order  and the number of interacting fields and write $G_{\ba}=(V_\mathrm{int}\cup V_{\mathrm{ext}},E_\mathrm{int}\cup E_{\mathrm{ext}})$. We further define the internal subgraph of $G_{\ba}$ as $G_{\ba,\mathrm{int}}\df(V_\mathrm{int},E_\mathrm{int})$.
\end{defi}
The internal vertices are the internal degrees of freedom that are integrated over in Lagrangian~\eqref{eq:GreenInteracting} ($\int\diff{d}{\boldsymbol{z}}$). The external vertices represent the interacting fields ($(x_i)_{i=1}^k$) and are connected to the internal vertices by external edges. The internal edges connect the internal vertices. Finally, only connected graphs are considered.
\begin{rem}
  A symmetric adjacency matrix is a way of encoding an undirected graph. If we take an upper-triangular part of the matrix and flatten it row by row we get our conformation exponent~$\ba$.
\end{rem}
\begin{thm}{\cite{harary1967number}}\label{thm:labelings}
  The number of labelings of a given graph~$G$ with $p$ vertices is
  \begin{equation}\label{eq:numofLabeling}
    l(G)={p!\over|\mathsf{Aut}{[G]}|}.
  \end{equation}
\end{thm}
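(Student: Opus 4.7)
The plan is to prove Theorem~\ref{thm:labelings} as a straightforward application of the orbit--stabilizer theorem to a natural action of the symmetric group on the set of labelings of~$G$.

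First I would set up the group action precisely. Fix an unlabeled graph~$G$ on $p$ vertices and identify a labeling with a bijection $\lambda:V(G)\to\{1,2,\dots,p\}$. The symmetric group $S_p$ acts on the set $\Lambda(G)$ of all such bijections by post-composition: $\sigma\cdot\lambda\df\sigma\circ\lambda$. This action is transitive and free on $\Lambda(G)$ itself, so $|\Lambda(G)|=p!$. The object we are counting, however, is not $\Lambda(G)$ but the set of \emph{distinct labeled graphs} arising from these labelings. Two labelings $\lambda_1,\lambda_2$ produce the same labeled graph precisely when the induced vertex relabeling $\lambda_2\circ\lambda_1^{-1}$ is a type-2 graph isomorphism from $G$ to itself, i.e.\ an element of $\mathsf{Aut}[G]$ in the sense of Definition~\ref{def:graphIsoAuto}.

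Next I would formalize this equivalence: declare $\lambda_1\sim\lambda_2$ iff $\lambda_2\circ\lambda_1^{-1}\in\mathsf{Aut}[G]$ (viewed as a permutation of vertex labels through the identification induced by $\lambda_1$). This is an equivalence relation, the equivalence classes are exactly the orbits of a subgroup action of $\mathsf{Aut}[G]$ on $\Lambda(G)$ acting by pre-composition, and each orbit has size $|\mathsf{Aut}[G]|$ because the action is free (an automorphism fixing the labeling induces the identity on $V(G)$, hence is trivial). By Lagrange/orbit--stabilizer, the number of distinct orbits is
\begin{equation*}
  l(G)=\frac{|\Lambda(G)|}{|\mathsf{Aut}[G]|}=\frac{p!}{|\mathsf{Aut}[G]|},
\end{equation*}
which is precisely Eq.~\eqref{eq:numofLabeling}.

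There is no real obstacle, only a mild bookkeeping point that must be handled with care: one has to distinguish the two graph-isomorphism notions in Definition~\ref{def:graphIsoAuto}, because what stabilizes a labeling is \emph{label-preserving} automorphism, yet the group that collapses labelings to the same labeled graph is the unlabeled automorphism group. The step that reconciles these viewpoints is the transport of $\mathsf{Aut}[G]$ to the label space via a chosen reference bijection $\lambda_1$; making that identification explicit is the only nontrivial part of writing the argument cleanly. Everything else is a direct orbit--stabilizer count on a free, transitive-on-labelings action.
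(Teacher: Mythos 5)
Your argument is correct, and it is essentially the standard proof of this classical result: the paper itself gives no proof of Theorem~\ref{thm:labelings} (it is quoted from the Harary--Palmer--Read reference), and the orbit--stabilizer count you give is exactly the argument behind that citation, so your write-up simply supplies the omitted proof. Two small points of bookkeeping if you align it with this paper's conventions: the group in Eq.~\eqref{eq:numofLabeling} must be read as the ordinary vertex automorphism group of $G$ (a self-isomorphism in the type-1 sense), not the literal label-preserving type-2 notion of Definition~\ref{def:graphIsoAuto}, which would be trivial for simple graphs; and since the graphs here are pseudographs, ``preserving $E$'' should mean preserving edge multiplicities and loops, under which your freeness claim (an automorphism fixing a labeling is the identity on vertices) and hence the quotient $p!/|\mathsf{Aut}[G]|$ go through unchanged.
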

Theorem~\ref{thm:labelings} answers the question of how many  labeled type-1 isomorphic graphs to $G$ there are.
\begin{lem}\label{lem:FeynmanMulti}
   Let $G_{\ba}$  be an $\ba$-generated labeled undirected graph where $I\df|V_\mathrm{int}|,E\df|V_\mathrm{ext}|$ and $(z_i)_{i=1}^I\in V_\mathrm{int}$ together with $(x_i)_{i=1}^E\in V_\mathrm{ext}$. Given a partition of $V_{\mathrm{ext}}$ into  $J\leq I$ disjoint sets $V_{\mathrm{ext}}^{(j)}$ of cardinality $E^{j}\df|V_{\mathrm{ext}}^{(j)}|$
   $$
   V_{\mathrm{ext}}=\bigcup_{j=1}^{J} V_{\mathrm{ext}}^{(j)},
   $$
   there is
    \begin{equation}\label{eq:FeynmanMulti}
       \xi_{\ba}=\prod_{i=1}^J\binom{E-\sum_{j=1}^{i-1}E^{j-1}}{E^i}{I!\over|\mathsf{Aut}{[G_{\ba,\mathrm{int}}]}|}
    \end{equation}
   Feynman diagrams representing the same physical interaction. We set $E^0=0$.
\end{lem}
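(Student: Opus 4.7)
The plan is to decompose the count of equivalent labeled Feynman diagrams into two independent combinatorial pieces: an external piece, governing how the distinguishable external fields are distributed among the internal interaction vertices, and an internal piece, governing the relabeling of the dummy integration variables.

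First I would fix what ``same physical interaction'' means at the level of the integral in~\eqref{eq:SmatrixOrderm}. The internal vertex labels $z_1,\dots,z_I$ are dummy integration variables, so any joint permutation of them leaves the $d$-dimensional integral invariant; the external labels $x_1,\dots,x_E$ are fixed physical data and cannot be permuted. Hence the equivalence class of a Feynman diagram under ``produces the same integral'' is generated by two independent operations: (a) a relabeling of the internal vertex set realizing the same unlabeled internal subgraph, and (b) a reassignment of which external vertex attaches to which internal vertex, compatible with the prescribed partition sizes $E^j$.

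For (a) I would invoke Theorem~\ref{thm:labelings} applied directly to $G_{\ba,\mathrm{int}}=(V_\mathrm{int},E_\mathrm{int})$: since $|V_\mathrm{int}|=I$, there are exactly $I!/|\mathsf{Aut}[G_{\ba,\mathrm{int}}]|$ distinct labelings of the internal topology, which is the last factor of~\eqref{eq:FeynmanMulti}. For (b) I would argue by a straightforward multinomial count: the data $(E^1,\dots,E^J)$ prescribe, for each of the $J\le I$ internal vertices carrying external legs, the number of external vertices attached to it, and the number of ordered partitions of the $E$ labeled external vertices into blocks of those prescribed sizes is the multinomial coefficient $\binom{E}{E^1,E^2,\dots,E^J}$, which expands by successive selection into the product of binomials appearing in~\eqref{eq:FeynmanMulti} (using the convention $E^0=0$).

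The step that needs genuine care, and which I expect to be the principal subtlety, is verifying that the two counts are truly independent and do not overcount. I would resolve this by observing that the two relabelings act on disjoint portions of the graph data: an internal relabeling permutes elements of $V_\mathrm{int}$ and $E_\mathrm{int}$ but leaves the external-leg profile $(E^j)_{j=1}^J$ invariant, since these are the external-degree numbers read off from internal vertices and are preserved by any internal automorphism class; meanwhile an external reassignment permutes labels in $V_\mathrm{ext}$ and rearranges incidences in $E_\mathrm{ext}$ but does not touch $G_{\ba,\mathrm{int}}$ at all. Consequently each equivalent Feynman diagram is encoded by a unique pair (external assignment, internal labeling), the two counts multiply without overlap, and their product is precisely the stated $\xi_{\ba}$.
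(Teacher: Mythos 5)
Your overall decomposition mirrors the paper's: the binomial product is indeed the successive-selection expansion of $\binom{E}{E^1,\dots,E^J}$ counting how the labeled external vertices are distributed over the internal attachment slots, and the factor $I!/|\mathsf{Aut}|$ is obtained from Theorem~\ref{thm:labelings}. The genuine gap sits exactly where you flag the "principal subtlety": you apply Theorem~\ref{thm:labelings} to the bare internal subgraph $(V_\mathrm{int},E_\mathrm{int})$ and justify independence of the two counts by asserting that any automorphism of that subgraph preserves the external-leg profile, "since these are the external-degree numbers read off from internal vertices." But the external degrees are not part of the data of $(V_\mathrm{int},E_\mathrm{int})$; nothing in a bare internal automorphism prevents it from exchanging an internal vertex carrying $E^j$ external legs with one carrying a different number, and in that case $I!/|\mathsf{Aut}[(V_\mathrm{int},E_\mathrm{int})]|$ undercounts, because the object whose inequivalent labelings must be counted is the internal subgraph \emph{decorated} by its external attachment multiplicities. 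The paper's proof supplies precisely this missing ingredient through the amputation-and-grafting construction: the external vertices are removed, the dangling external edges are turned into loops on the internal vertices, and the $|\mathsf{Aut}[G_{\ba,\mathrm{int}}]|$ entering \eqref{eq:FeynmanMulti} is the automorphism order of this grafted pseudograph, i.e.\ of the internal graph that remembers how many external stubs each internal vertex carries. (That the grafted loops must be kept distinct from genuine internal loops is implicit in the paper's $\phi^4$ example, where $|\mathsf{Aut}[G_{\ba_1,\mathrm{int}}]|=1$; an unmarked grafted graph there would acquire a spurious swap symmetry.)

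Your route can be closed without grafting, but only by an observation you never make: in the regime where the internal labels are genuinely interchangeable (all internal vertices arising from the same interaction monomial at the given order), row $i$ of \eqref{eq:DiophantGeneral} forces every internal vertex to have the same total degree $\ell$, so its number of external legs equals $\ell$ minus its degree inside $(V_\mathrm{int},E_\mathrm{int})$ and is automatically preserved by every bare automorphism; only then do the bare and decorated automorphism groups coincide and your product reproduce $\xi_{\ba}$. As written, the invariance claim is asserted rather than proved, and it is exactly the step the paper's construction is designed to secure. A smaller point: you define equivalence as "produces the same integral," yet your operation (b) re-routes external legs and in general changes the integrand; the equivalence actually counted by $\xi_{\ba}$, as the paper's remark after the lemma makes explicit, is sameness of the Feynman diagram with external legs treated as interchangeable.
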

  \begin{figure}[h]
   \resizebox{10cm}{!}{\includegraphics{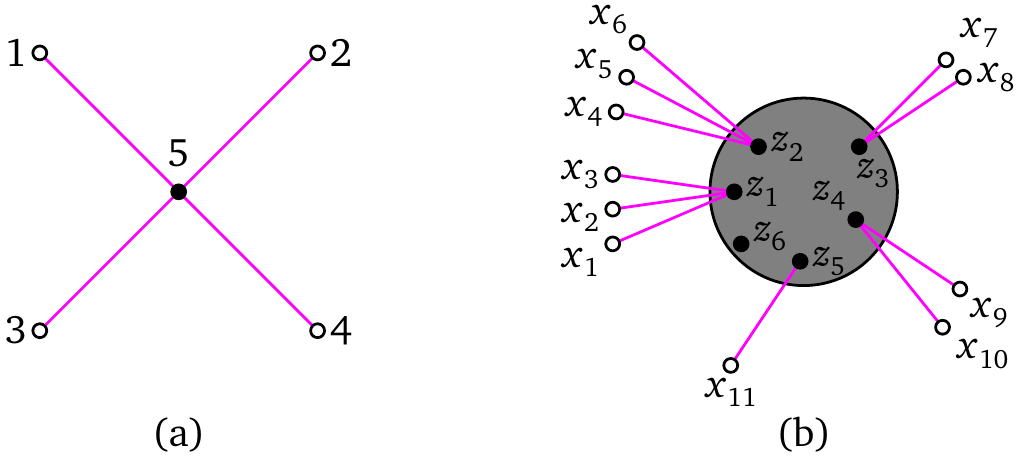}}
    \caption{The illustration of what is counted by Lemma~\ref{lem:FeynmanMulti}. (a) There are four external ($E=4$) and one internal vertices ($I=1$). Hence $J=1$ and from~\eqref{eq:FeynmanMulti} we get $\xi_{\ba}=\binom{4}{4}{1!\over1}=1$ as expected since all 24 permutations of the external vertices correspond to the same Diophantine solution. (b) We count $E=11, I=6$ and $J=5$ where $E^1=E^2=3,E^3=E^4=2$ and $E^5=1$.  By plugging all into~\eqref{eq:FeynmanMulti} we get $\xi_{\ba}=\binom{11}{3}\binom{8}{3}\binom{5}{2}\binom{3}{2}\binom{1}{1}{6!\over|\mathsf{Aut}{[G_{\ba,\mathrm{int}}]}|}$, where $\mathsf{Aut}{[G_{\ba,\mathrm{int}}]}$ depends on the internal subgraph structure represented by the grey circle.}
    \label{fig:multipl}
   \end{figure}
\begin{rem}[\bfseries{important}]
  Let us emphasize what we do and do not count in this lemma. In Fig.~\ref{fig:multipl}~(a), there is an example of a simple Feynman diagram with 4! ways of permuting the external vertices. But at the level of a system of Diophantine equations, all 4! possibilities corresponds to a single nonnegative solution and its  multiplicities were calculated in Theorem~\ref{thm:Multiplicity}.
   In Fig.~\ref{fig:multipl} (b), there is a generic $\ba$-generated graph where the grey circle represents an arbitrary relation among the six internal vertices $z_i$. There can be multiple edges and loops and internal vertices need not be connected to any external vertex. The external vertices are labeled by $x_i$ and a swap of $x_i$ with different $z_i$ corresponds to a different solution of a Diophantine system. All such possibilities are then enumerated by $\xi_{\ba}$ in~\eqref{eq:FeynmanMulti}.
\end{rem}
\begin{proof}[Proof of Lemma~\ref{lem:FeynmanMulti}]
  The graph $G_{\ba}$ is $\ba$-generated and so we assume that if an internal vertex connects to one or more external vertices their number is fixed. Then, the product of binomials $\prod_{i=1}^J\binom{E-\sum_{j=1}^{i-1}E^{j-1}}{E^i}$ counts the number of ways the external vertices can be connected to the internal ones. The $i$-th binomial `numerator' is depleted by the external vertices already connected to the internal ones and there are naturally many ways (depending on the order of $E^j$) leading to the same result. The coefficient $I!$ in~\eqref{eq:FeynmanMulti} comes from Theorem~\ref{thm:labelings} and it is a permutation of all internal vertices. We now introduce two procedures named \emph{amputation} (not to be confused with the equally called procedure for removing infinities in the momentum-based Feynman diagrams!) and \emph{grafting} in order to determine $|\mathsf{Aut}{[G_{\ba,\mathrm{int}}]}|$. For a given $G_{\ba}$ we first remove all the external vertices but keep the external edges `freely floating'. This is called amputation and at this point it is not a graph. We make it a graph again by creating loops of out the amputated external edges. This step will be called grafting and the whole procedure is illustrated in Fig.~\ref{fig:ampgraft}. The newly created loops hold the information about the internal subgraph symmetry and thus its automorphism group (using Definition~\ref{def:graphIsoAuto}~type-2 isomorphism). Then, ${I!\over|\mathsf{Aut}{[G_{\ba,\mathrm{int}}]}|}$ is the total number of internal graphs with different $\ba$ corresponding to different solutions of the Diophantine system leading to the same graph.
\end{proof}
  \begin{figure}[h]
   \resizebox{14.7cm}{!}{\includegraphics{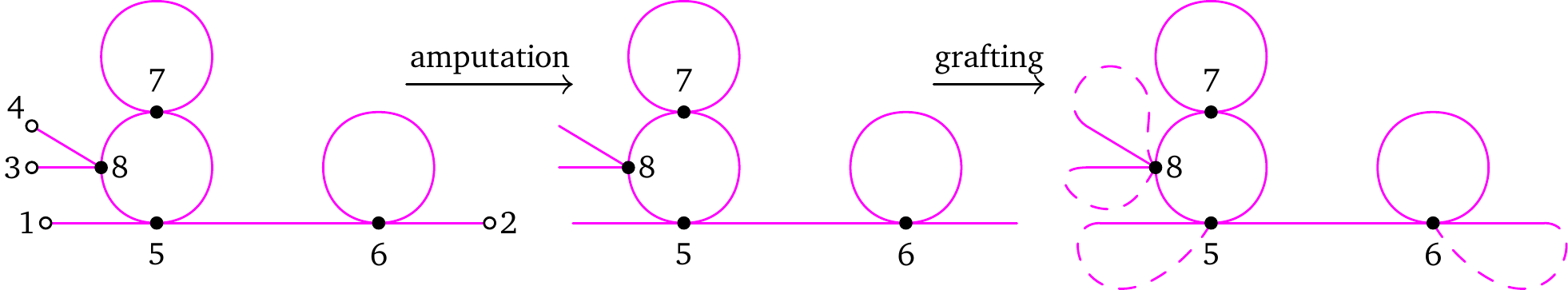}}
    \caption{The procedure of amputation removes all the external vertices (white circles) followed by grafting where the external `stubs' are made into loops (the dashed reconnections).}
    \label{fig:ampgraft}
   \end{figure}
\begin{rem}
  Note that a grafted graph is not a vacuum Feynman diagram. This can be seen in Fig.~\ref{fig:ampgraft} on the right where some vertices have different vertex orders (the number of edges meeting there). It is just a pseudograph we have created to help us calculate the overall multiplicity factor of the graph we started with. Even though the graph automorphism order is always studied when dealing with Feynman diagrams (explicitly like in~\cite{abdesselam2003feynman} or implicitly like everywhere else), our approach invokes it at a different stage of the calculations and for grafted graphs compared to other studies.
\end{rem}
How do we calculate $\xi_{\ba}$? There are two different routes. We can follow Theorem~\ref{thm:labelings} and that includes the calculation of $|\mathsf{Aut}{[G_{\ba,\mathrm{int}}]}|$. For moderate graphs, it can be done by realizing that the whole list of Diophantine solutions $\ba$ splits into several classes of physically indistinguishable processes (the same Feynman diagrams). We take  \emph{any} representative of the class we are interested in, process it according to the previous lemma and calculate $|\mathsf{Aut}{[G_{\ba,\mathrm{int}}]}|$. The problem of graph automorphism is not known to be tractable and is closely related to the well-known graph isomorphism membership problem~\cite{hoffmann,kreher1998combinatorial}. However, practically efficient algorithms are known and implemented (for pseudographs and discrete mathematics in general, SAGE~\cite{sagemath} is a great tool).  Even though almost all finite graphs possess no symmetry (that it, a trivial $\mathsf{Aut}{[G_{\ba,\mathrm{int}}]}$, as proved in~\cite{erdos1963asymmetric} for simple graphs), we need the exact counting.

To check the graph automorphism order we can take the second route to get $\xi_{\ba}$. By solving a Diophantine system we get all admissible  conformation exponents  $\ba$ and within each class there is obviously never the same $\ba$ more than once.  In other words, the graph automorphism order is already accounted for and $\xi_{\ba}$ is nothing else than the cardinality of the class we are interested in. We did not get away with the hardness of the problem, though. In order to compute the cardinality we have to take a representative and check how many graphs from the list of all $\ba$'s  it is isomorphic to (using Definition~\ref{def:graphIsoAuto},~type-1 isomorphism). This approach is most likely  harder from the computational complexity viewpoint because  the graph isomorphism membership algorithm has to be invoked many times and it is computationally equivalent to the graph automorphism order~\cite{hoffmann}. On the other hand, recall that the Diophantine system of solutions grows polynomially and, mainly, many solutions are computationally easy to disprove to be isomorphic to the studied representative by checking whether the graph is (dis)connected, by counting  the edges or checking any other graph invariant~\cite{kreher1998combinatorial} that is computationally cheap to implement on the level of the conformation exponent $\ba$ (to name a few more it is the number of loops, the ways the external edges are connected to the internal vertices etc.). But to the author's knowledge there is no guarantee that two different classes always differ at least in one invariant (for any bosonic theory). After  finding $\xi_{\ba}$, we can immediately  obtain the graph automorphism  order by counting $E,E^j$ and $I$  and inserting it into~Eq.~\eqref{eq:FeynmanMulti}. Both routes will be illustrated in the next section.

Our  hope is that the structure of a convex lattice polyhedron where all nonnegative solutions of a Diophantine system live will provide yet another way of counting the graph automorphism order, this time without the need to run computationally costly algorithms. We will briefly return to this open problem in Section~\ref{sec:discussions}.

\section{Applications and examples}\label{sec:apps}

Several explicit examples are worked out in the next section. One can promptly (that is, in linear time in the number of Diophantine solutions) processed the solutions of the Diophantine system and keep only the diagrams of interest such as all connected diagrams or all vacuum diagrams and so on. It is well-known that the vacuum contributions in the numerator of Eq.~\eqref{eq:GreenInteracting} cancel with the free $S$-matrix in the denominator~\cite{schwartz2014quantum} and the cancellation can be done at the level of Diophantine solutions. Again, we illustrate it in the next section.
%

%

\begin{exa}[$\phi^4$ for $k=2$, second perturbative order in detail]
  Let's calculate the third term from the expanded numerator of~Eq.~(\ref{eq:GreenInteracting}). It is proportional to
  \begin{equation}\label{eq:phi4expansionk2}
    \int\diff{4}{\boldsymbol{z}}\bra{0_M}\mathsf{T}\{\phi(x_1)\phi(x_2)\phi(z_1)^4\phi(z_2)^4\}\ket{0_M}
    =  \int\diff{4}{\boldsymbol{z}} \lan1^{1}2^{1}3^{4}4^{4}\ran_0.
  \end{equation}
  Upon setting $f=4$, Eq.~\eqref{eq:GreensFcns} becomes
  \begin{equation}\label{eq:GreensSplitf4}
    \lan1^{\ell_1}2^{\ell_2}3^{\ell_3}4^{\ell_4}\ran_0
    \equiv\lan\underbrace{1\dots1}_{\ell_1}\underbrace{2\dots2}_{\ell_2}\underbrace{3\dots3}_{\ell_3}\underbrace{4\dots4}_{\ell_4}\ran_0,
  \end{equation}
  Next, we solve~\eqref{eq:DiophantGeneral} that took the following form:
  \begin{subequations}\label{eq:Diophantinef4}
  \begin{align}
    2\a_{11}+\a_{12}+\a_{13}+\a_{14} & = \ell_1,\label{eq:Diophantine1} \\
    \a_{12}+2\a_{22}+\a_{23}+\a_{24} & = \ell_2,\label{eq:Diophantine2} \\
    \a_{13}+\a_{23}+2\a_{33}+\a_{34} & = \ell_3, \\
    \a_{14}+\a_{24}+\a_{34}+2\a_{44} & = \ell_4,\label{eq:Diophantine4}
  \end{align}
  \end{subequations}
  where $\ell_1=\ell_3=1$ and $\ell_3=\ell_4=4$. The result is eleven conformation exponents
  \begin{equation}\label{eq:ell1144}
   (\ba_c)_{c=1}^{11}=\left\{
    \begin{array}{cccccccccc}
     (0 ,& 0 ,& 0 ,& 1 ,& 0 ,& 0 ,& 1 ,& 1 ,& 2 ,& 0) \\
     (0 ,& 0 ,& 0 ,& 1 ,& 0 ,& 1 ,& 0 ,& 0 ,& 3 ,& 0) \\
     (0 ,& 0 ,& 0 ,& 1 ,& 0 ,& 1 ,& 0 ,& 1 ,& 1 ,& 1) \\
     (0 ,& 0 ,& 1 ,& 0 ,& 0 ,& 1 ,& 0 ,& 0 ,& 2 ,& 1) \\
     (0 ,& 0 ,& 1 ,& 0 ,& 0 ,& 0 ,& 1 ,& 0 ,& 3 ,& 0) \\
     (0 ,& 0 ,& 1 ,& 0 ,& 0 ,& 0 ,& 1 ,& 1 ,& 1 ,& 1) \\
     (0 ,& 0 ,& 0 ,& 1 ,& 0 ,& 0 ,& 1 ,& 2 ,& 0 ,& 1) \\
     (0 ,& 1 ,& 0 ,& 0 ,& 0 ,& 0 ,& 0 ,& 0 ,& 4 ,& 0) \\
     (0 ,& 1 ,& 0 ,& 0 ,& 0 ,& 0 ,& 0 ,& 1 ,& 2 ,& 1) \\
     (0 ,& 1 ,& 0 ,& 0 ,& 0 ,& 0 ,& 0 ,& 2 ,& 0 ,& 2) \\
     (0 ,& 0 ,& 1 ,& 0 ,& 0 ,& 1 ,& 0 ,& 1 ,& 0 ,& 2) \\
    \end{array}
    \right\}
  \end{equation}
  whose lexicographic ordering is
  $$
  \ba=(\a_{11},\a_{12},\a_{13},\a_{14},\a_{22},\a_{23},\a_{24},\a_{33},\a_{34},\a_{44}).
  $$
  The solutions and the values of $\ell_i$ are inserted into
  \begin{subequations}\label{eq:Gammasf4}
    \begin{align}
      \G & = \binom{\ell_1}{2\a_{11}}\binom{\ell_2}{2\a_{22}}\binom{\ell_3}{2\a_{33}}\binom{\ell_4}{2\a_{44}}\prod_{i=1}^4(2\a_{ii}-1)!!,\\
      \g_{12} & = \binom{\ell_1-2\a_{11}}{\a_{12}}[\ell_2-2\a_{22}]_{\a_{12}},\label{eq:Gammasga12}\\
      \g_{13} & = \binom{\ell_1-2\a_{11}-\a_{12}}{\a_{13}}[\ell_3-2\a_{33}]_{\a_{13}},\label{eq:Gammasga13}\\
      \g_{14} & = \binom{\ell_1-2\a_{11}-\a_{12}-\a_{13}}{\a_{14}}[\ell_4-2\a_{44}]_{\a_{14}},\\
      \g_{23} & = \binom{\ell_2-2\a_{22}-\a_{12}}{\a_{23}}[\ell_3-2\a_{33}-\a_{13}]_{\a_{23}}, \\
      \g_{24} & = \binom{\ell_2-2\a_{22}-\a_{12}-\a_{23}}{\a_{24}}[\ell_4-2\a_{44}-\a_{14}]_{\a_{24}},\\
      \g_{34} & = \binom{\ell_3-2\a_{33}-\a_{13}-\a_{23}}{\a_{34}}[\ell_4-2\a_{44}-\a_{14}-\a_{24}]_{\a_{34}}
    \end{align}
  \end{subequations}
  to get  Eq.~\eqref{eq:multFactor} whose explicit form reads
  \begin{equation}\label{eq:multFactorf4}
    \mu_{\ba}=\G\,\g_{12}\g_{13}\g_{14}\g_{23}\g_{24}\g_{34}.
  \end{equation}
  \begin{figure}[t]
   \resizebox{11.3cm}{!}{\includegraphics{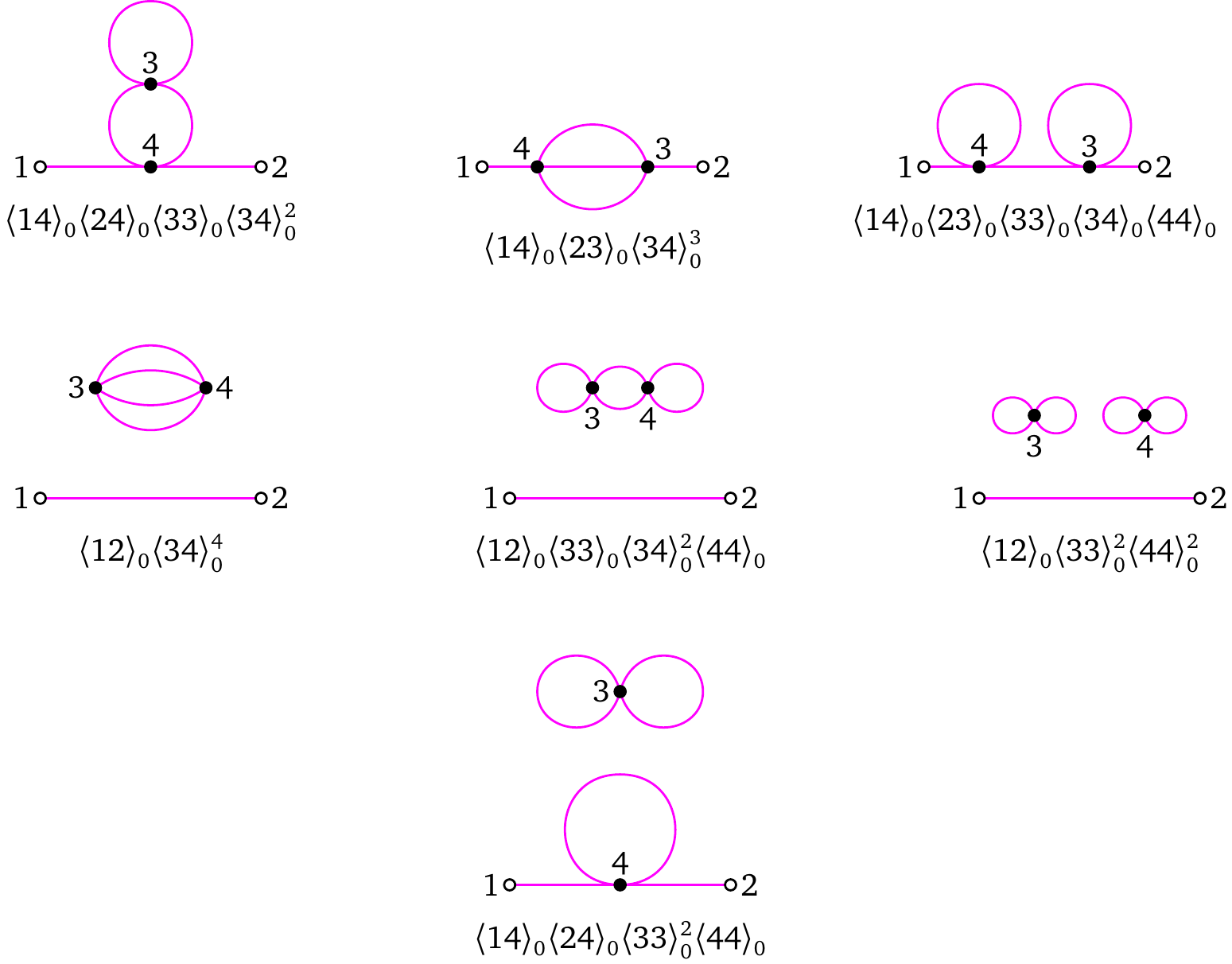}}
    \caption{Seven unique conformations (for the used convention see Definition~\ref{def:coformations}) and their Feynman diagrams out of eleven conformations~Eq.~(\ref{eq:phi4expansionk2}) splits into. The black dots are the internal coordinates $z_1$ and $z_2$, the white circles are the external fields $x_1$ and $x_2$. }
    \label{fig:feynmanphi4}
   \end{figure}
  We obtain
  \begin{equation}\label{eq:muell1144}
    \mu_{\ba_c}=(144, 96, 144, 144, 96, 144, 36 , 24, 72, 9, 36)
  \end{equation}
  and perform a highly non-trivial check of both Theorems~\ref{thm:IsserlisRefined} and~\ref{thm:Multiplicity}:
  $$
  \sum_{c=1}^{11}\mu_{\ba_c}=\Big(\sum_{i=1}^{4}\ell_i-1\Big)!!=9!!.
  $$
  We obtained eleven diagrams but not all are physically distinguishable. To account for them  we use Theorem~\ref{thm:Multiplicity} (valid for connected diagrams which are physically most interesting). The unique diagrams and their conformations are depicted in Fig.~\ref{fig:feynmanphi4}. In the first row we put all three connected diagrams (the first three rows in~\eqref{eq:ell1144}) with  multiplicity two which we will now verify. $E=2$ and $I=2$ hold for all diagrams. For the left upper diagram we find $E^1=2$ while for the other two we get $E^1=E^2=1$. Then, following the procedure of amputation and grafting, we create new graphs where the upper left one has no symmetry (and so $|\mathsf{Aut}{[G_{\ba_1,\mathrm{int}}]}|=1$) while the other two are identical upon swapping $3\leftrightharpoons4$ (hence $|\mathsf{Aut}{[G_{\ba_{2,3},\mathrm{int}}]}|=2$). So we get
    \begin{subequations}
        \begin{align}
          \xi_{\ba_1} & = \binom{2}{2}{2!\over1}=2,\\
          \xi_{\ba_2} & = \binom{2}{1}\binom{1}{1}{2!\over2}=2,\\
          \xi_{\ba_3} & = \binom{2}{1}\binom{1}{1}{2!\over2}=2
        \end{align}
    \end{subequations}
  and indeed $\ba_{4,5,6}$ in~\eqref{eq:ell1144} are the other class members for $\ba_{1,2,3}$ (in this order). We just explored the two routes of calculating $\xi_{\ba}$ as described below Lemma~\ref{lem:FeynmanMulti}. A final check is the overall multiplicity of a given Feynman diagram: $\nu_{\ba_i}=\mu_{\ba_i}\xi_{\ba_i}=(288,192,288)$ for $i=1,2,3$ in accord with~\cite[Table~II]{kleinert2000recursive}.
\end{exa}
\begin{rem}
    By solving~\eqref{eq:Diophantinef4} for $\ell_i=2$ we get 17  conformations  $\lan1^{2}2^{2}3^{2}4^{2}\ran_0$  and find
    \begin{equation}\label{eq:multiell2222}
      \mu_{\ba_c}=(2, 2, 2, 1, 8, 4, 2, 8, 16, 4, 2, 8, 16, 16, 8, 4, 2)
    \end{equation}
    from~\eqref{eq:multFactorf4}. A calculation performed in~\cite{bradler2016dio} leads to a closed expression for the number of non-negative solutions of~\eqref{eq:Diophantinef4} (for $\ell_i=\ell$ even) to be
  \begin{equation}\label{eq:ellEven}
    \mathsf{e}(\ell)=\frac{1}{576} (\ell+2) (\ell+4)\big(\ell (\ell+5) (\ell (\ell+4)+12)+72\big).
  \end{equation}
  For $\ell=2$ we get $\mathsf{e}(2)=17$ in accord with the above result. In theory, one can get closed expressions for the number of nonnegative Diophantine solutions following the  methods of Ehrhart theory~\cite{ehrhart1962,beck2007computing}.
\end{rem}
\begin{exa}[$\phi^4$ for $k=4$,  third perturbative order]
  We study
  \begin{equation}\label{eq:phi4expansionk4}
    \int\diff{4}{\boldsymbol{z}}\bra{0_M}\mathsf{T}\{\phi(x_1)\phi(x_2)\phi(x_3)\phi(x_4)\phi(z_1)^4\phi(z_2)^4\phi(z_3)^4\}\ket{0_M}
    =  \int\diff{4}{\boldsymbol{z}} \lan1^{1}2^{1}3^{1}4^{1}5^46^47^4\ran_0.
  \end{equation}
  By solving the corresponding Diophantine system we obtain 960 nonnegative solutions whose conformation exponents will not be listed. After filtering out the disconnect graphs we find in  total 8 classes of connected graphs just like in~\cite[Table~III,~$p=3$]{kleinert2000recursive}. Our results completely agree with the multiplicities of all 8 Feynman diagrams. Let's provide two class examples whose representatives are
    \begin{align}
        \ba_1&=(0,0,0,0,0,0,1,0,0,0,0,0,1,0,0,1,0,0,0,1,0,0,0,1,1,1,1,0), \\
        \ba_2&=(0,0,0,0,0,0,1,0,0,0,1,0,0,0,0,1,0,0,0,1,0,0,0,0,1,1,2,0),
    \end{align}
  where $\mu_{\ba_1}=6912,\mu_{\ba_2}=3456$. For $\ba_1$ we find $E=4,E^1=E^2=2$ and $I=3$. Following Lemma~\ref{lem:FeynmanMulti} we get a grafted graph $G_{\ba_1,\mathrm{int}}$ where $|\mathsf{Aut}{[G_{\ba_1,\mathrm{int}}]}|=2$ leading to
  \begin{equation}
    \xi_{\ba_1}= \binom{4}{2}\binom{2}{2}{3!\over2}=18.
  \end{equation}
  Thus $\nu_{\ba_1}=\mu_{\ba_1}\xi_{\ba_1}=124416$ as it should be. In the second case we have $E=4,E^1=3,E^2=1$ and $I=3,|\mathsf{Aut}{[G_{\ba_2,\mathrm{int}}]}|=1$ and so
  \begin{equation}
    \xi_{\ba_2}= \binom{4}{3}\binom{1}{1}{3!\over1}=24.
  \end{equation}
  Therefore, $\nu_{\ba_2}=\mu_{\ba_2}\xi_{\ba_2}=82944$ again in agreement with~\cite{kleinert2000recursive}.
\end{exa}
\begin{exa}[$\phi^3$ for $k=2$, second perturbative order plus vacuum amplitudes]
  Here we analyze
  \begin{equation}\label{eq:phi3expansionNUM}
    \int\diff{4}{\boldsymbol{z}}\bra{0_M}\mathsf{T}\{\phi(x_1)\phi(x_2)\phi(z_1)^3\phi(z_2)^3\}\ket{0_M}
    =  \int\diff{4}{\boldsymbol{z}} \lan1^{1}2^{1}3^{3}4^{3}\ran_0.
  \end{equation}
  Similarly, the second expansion term of the denominator of~\eqref{eq:GreenInteracting} (the vacuum digrams) is proportional to
  \begin{equation}\label{eq:phi3expansionDENOM}
    \int\diff{4}{\boldsymbol{z}}\bra{0_M}\mathsf{T}\{\phi(z_1)^3\phi(z_2)^3\}\ket{0_M}
    =  \int\diff{4}{\boldsymbol{z}} \lan1^{3}2^{3}\ran_0.
  \end{equation}
  Using Theorem~\ref{thm:IsserlisRefined} we find 8 conformations for Eq.~(\ref{eq:phi3expansionNUM})
  and Theorem~\ref{thm:Multiplicity} yields their multiplicities
  \begin{equation}\label{eq:multiell1133}
      \mu_{\ba_c}=(18, 6, 9, 18, 9, 18, 9, 18).
  \end{equation}
  We check the solution by inspecting
  \begin{equation}\label{eq:checkell1133}
      \sum_{c=1}^{8}\mu_{\ba_c}=105=7!!=\Big(\sum_{i=1}^{4}\ell_i-1\Big)!!,
  \end{equation}
  where the RHS follows from $\ell_1=\ell_2=1$ and $\ell_3=\ell_4=3$.  By considering just connected diagrams, $(\ba_1,\ba_8)$ and $(\ba_4,\ba_6)$ represent the same processes as confirmed by Lemma~\ref{lem:FeynmanMulti} after we do the proper counting, grafting and graph automorphism counting:
  \begin{subequations}
        \begin{align}
          \xi_{\ba_1} & = \binom{2}{2}{2!\over1}=2,\\
          \xi_{\ba_4} & = \binom{2}{1}\binom{1}{1}{2!\over2}=2.
        \end{align}
    \end{subequations}
  Hence  $\nu_{\ba_1}=\nu_{\ba_4}= 36$ as we can find in~\cite[Chapter~7]{schwartz2014quantum}.

  For~\eqref{eq:phi3expansionDENOM} we get only two solutions
  \begin{equation}
    \{\ba_c\}_{c=1}^{2}=\left\{
    \begin{array}{cccccccccc}
     (0 ,& 0 ,& 0 ,& 0 ,& 0 ,& 0 ,& 0 ,& 0 ,& 3 ,& 0) \\
     (0 ,& 0 ,& 0 ,& 0 ,& 0 ,& 0 ,& 0 ,& 1 ,& 1 ,& 1)
    \end{array}
    \right\}
  \end{equation}
  with multiplicities $\mu_{\ba_c}=(6,9)$. As these are vacuum graphs, there is no need to remove the external edges and graft and so following Lemma~\ref{lem:FeynmanMulti} we find that $\xi_{\ba_1}=\xi_{\ba_2}=2!/2=1$. The final answer is then  $\nu_{\ba_1}=6,\nu_{\ba_2}= 9$ in agreement with~\cite[Chapter~7]{schwartz2014quantum}.
\end{exa}

\subsection*{A pair of Unruh-DeWitt detectors}

A realistic Unruh-DeWitt detector~\cite{unruh1976notes,dewitt1979quantum} is described by the following interaction Hamiltonian
\begin{equation}\label{eq:UDWHint}
  H(\tau)=\la w(\tau)(\s^+e^{i\tau\d}+\s^-e^{-i\tau\d})\int\difff{x} f(x)\phi(t,x),
\end{equation}
and
\begin{equation}\label{eq:RealField}
  \phi(t,x)={1\over(2\pi)^3}\int{\difff{k}\over2\om_k}\big(a_ke^{-i(\om_kt-k\cdot x)}+a^\dg_ke^{i(\om_kt-k\cdot x)}\big)
\end{equation}
is a real scalar field where $(\om_k,k)$ is a four-momentum, $w(\tau)$ is a detector's window function ($\tau$ is its proper time and $t(\tau),x(\tau)$ are Minkowski coordinates), $f(x)$ a smearing function, $\s^\pm$ are the detector ladder operators, $\d$ stands for the energy gap and $\la$ for a coupling constant. The evolution operator for observer $A$ reads
\begin{align}\label{eq:2ndOrderExp}
  U_A(\tau_1,\tau_0)\otimes U_B(\tau_2,\tau_0)
  & = \mathsf{T}{\Big\{\exp{\big[-i\int_{\tau_0}^{\tau_1}\dif{\tau'}H_A(\tau')\big]}\exp{\big[-i\int_{\tau_0}^{\tau_2}\dif{\tau''}H_B(\tau'')\big]}\Big\}}.
\end{align}

Out task is to efficiently calculate the following unitary (tensor product and time-ordering implied)
\begin{align}\label{eq:twoUDWdetectors}
U_A(\tau_1,\tau_0)\otimes U_B(\tau_2,\tau_0)&=\mathsf{T}{
\Big\{\int_{\tau_0}^{\tau_1}\dif{\tau'}\int_{\tau_0}^{\tau_2}\dif{\tau''}e^{-i(A_+\sigma_A^++A_-\sigma_A^-+B_+\sigma_B^++B_-\sigma_B^-)}\Big\}} \nn\\
&= \mathsf{T}{
\Big\{\sum_{n=0}^\infty \frac{(-i)^n\la^n}{n!} (A_+\sigma_A^++A_-\sigma_A^-+B_+\sigma_B^++B_-\sigma_B^-)^n\Big\}}
\end{align}
for any $n$, no matter how large, and for any input atomic state. We denoted
\begin{align}
    A_+ & = \int_{\tau_0}^{\tau_1}\dif{\tau'}w(\tau')e^{i\tau'\d}\phi(t'(\tau')), \\
    B_+ & = \int_{\tau_0}^{\tau_2}\dif{\tau''}w(\tau'')e^{i\tau''\d}\phi(t''(\tau''))
\end{align}
and so $A_-=A_+^\dg,B_-=B_+^\dg$ holds. Note that we set $f(x)$ to be a delta function for convenience. The result of this calculation is oblivious to the details of smearing, window functions or trajectories. The  expression that matters is Eq.~(\ref{eq:twoUDWdetectors}) together with the fact fact that $A_\pm,B_\pm$ are boson scalar fields. Our final goal is to express $2n$-point Green's functions in terms of a polynomial number of two-point correlators and not their calculation where these details are relevant.

The matrix elements of~(\ref{eq:twoUDWdetectors}) will be written as
\begin{equation}\label{eq:omega}
\om_{AB}(\kbr{kl}{mn}) = \bra{0_M} \bra{kl} U_AU_B \ket{ij} \big(\bra{mn} U_AU_B \ket{ij}\big)^\dg \ket{0_M},
\end{equation}
where $i,j,k,l\in\{0,1\}$ denote the canonical basis of the detectors' initial and final state and a tensor product is implied. In all previous examples we calculated a transition amplitude $\bra{f}S^{(m)}\ket{i}$ for a perturbatively evolved $S$-matrix. This is a typical task in QFT. Here we are after a different (but, of course, related) quantity: the transition probability. More precisely, we want to perturbatively expand density matrix~\eqref{eq:omega} where the probabilities lie on the diagonal. We will derive $\om_{AB}$ for $i,j=0$ (that is, assuming the detectors are in their ground states) and later argue that a trivial modification of our analysis will enable us to calculate Eq.~(\ref{eq:omega}) for any canonical basis state $\ket{ij}$. This, on the other hand, will lead to  a `standalone' unitary matrix $U_A\otimes U_B$ expressed perturbatively which then can be used to find $\om_{AB}$ for \emph{any}  detector input state and thus completely solving the problem.

We start by taming the exponential number (in $n$) of the summands of the core expression,~Eq.~(\ref{eq:twoUDWdetectors}).
By plugging the result into~Eq.~\ref{eq:omega} we get again a polynomial number of $2n$-point Green's functions and apply the results obtained in this paper to express them in terms of the two-point Green's functions. Let us recall the algebra of Pauli operators  representing the two-level detector(s). They satisfy $(\s^+)^2=(\s^-)^2=0$ and also $\s_A^\pm\s_B^\pm=\s_B^\pm\s_A^\pm$ together with $\s_A^\pm\s_B^\mp=\s_B^\mp\s_A^\pm$. That is, the atomic operators for the two detectors commute in the Unruh-DeWitt model.
\begin{prop}\label{prop:phaseI}
  For $n=2m$ we obtain from Eq.~(\ref{eq:twoUDWdetectors})
  \begin{align}\label{eq:expEven}
    (A_+\sigma_A^++A_-\sigma_A^-+B_+\sigma_B^++B_-\sigma_B^-)^{2m}\ket{00}
    &=\sum_{\ell=0}^m\binom{2m}{2\ell}\,A_+^\ell A_-^\ell B_+^{m-\ell}B_-^{m-\ell}\ket{00}\nn\\
    &+\sum_{\ell=1}^m\binom{2m}{2\ell-1}\,A_+^{\ell}A_-^{\ell-1} B_+^{m-\ell+1}B_-^{m-\ell}\ket{11}
  \end{align}
  and for $n=2m+1$ we get
  \begin{align}\label{eq:expOdd}
    (A_+\sigma_A^++A_-\sigma_A^-+B_+\sigma_B^++B_-\sigma_B^-)^{2m+1}\ket{00}
    &=\sum_{\ell=0}^m\binom{2m+1}{2\ell}\,A_+^\ell A_-^\ell B_+^{m-\ell+1}B_-^{m-\ell}\ket{01}\nn\\
    &+\sum_{\ell=0}^m\binom{2m+1}{2\ell}\,A_+^{m-\ell+1}A_-^{m-\ell} B_+^{\ell}B_-^{\ell}\ket{10}.
  \end{align}
\end{prop}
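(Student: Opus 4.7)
The plan is to expand the $n$-th power via the multinomial theorem and then cull the surviving terms using the nilpotency $(\sigma_X^{\pm})^2=0$ for $X\in\{A,B\}$ together with the stated intersite commutativity. Each of the $4^n$ ordered monomials factors, inside $\mathsf{T}$, into a bosonic part built from the $A_\pm$ and $B_\pm$ times a Pauli part; because $\sigma_A^\pm$ commutes with $\sigma_B^\pm$, the Pauli factor further splits into an ordered product of $\sigma_A^{\epsilon}$'s (at the positions occupied by an $A$-term) and an ordered product of $\sigma_B^{\epsilon}$'s (at the positions occupied by a $B$-term).

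Fix a subset $S_A\subset\{1,\ldots,n\}$ of ``$A$-positions'' and set $n_A=|S_A|$, $n_B=n-n_A$. Acting on $\ket{0}_A$, the ordered $A$-Pauli word is nonzero precisely when its signs, read right-to-left (i.e.\ in the order of their action on the vacuum), alternate $+,-,+,-,\ldots$, the rightmost one being $+$. Hence for each choice of $S_A$ exactly one sign assignment survives; it contributes $\lceil n_A/2\rceil$ factors of $A_+$ and $\lfloor n_A/2\rfloor$ factors of $A_-$, and leaves detector $A$ in $\ket{0}_A$ when $n_A$ is even and in $\ket{1}_A$ when $n_A$ is odd. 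The analogous statement holds on $S_B$. All remaining sign assignments vanish by nilpotency, so the $4^n$ candidate monomials collapse to $\binom{n}{n_A}$ surviving contributions for each admissible $n_A$.

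The conclusion is then a parity-bookkeeping exercise. For $n=2m$, outcome $\ket{00}$ requires both $n_A,n_B$ even; writing $n_A=2\ell$ with $0\le\ell\le m$ forces $n_B=2(m-\ell)$ and yields the coefficient $\binom{2m}{2\ell}$ together with the bosonic factor $A_+^\ell A_-^\ell B_+^{m-\ell}B_-^{m-\ell}$. Outcome $\ket{11}$ demands both odd, giving the second sum with $n_A=2\ell-1$, $n_B=2(m-\ell)+1$ and coefficient $\binom{2m}{2\ell-1}$; parity excludes the $\ket{01}$ and $\ket{10}$ branches. The case $n=2m+1$ is symmetric: now $\ket{00}$ and $\ket{11}$ are forbidden while $\ket{01}$ and $\ket{10}$ produce the two sums of~\eqref{eq:expOdd} under the same counting. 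The one point that requires care --- and which I expect to be the main obstacle in a fully rigorous write-up --- is verifying that, inside $\mathsf{T}$, an interleaved monomial with prescribed multiplicities of $A_\pm,B_\pm$ coalesces to the ordered form stated on the right-hand side; this is ultimately immediate because the time-ordered product of scalar field integrals is symmetric under permutation of its factors, so only the multiplicities matter, but one has to keep separate the intrinsic (non-commuting) ordering of the Pauli operators from the freely permutable ordering of the bosonic ones at every step of the bookkeeping.
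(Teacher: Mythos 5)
Your proof is correct and is essentially the paper's own argument in different notation: the paper encodes the surviving operator strings as branches of a depth-$n$ binary tree (an $A$/$B$ choice at each level, with the $\pm$ sign uniquely forced by the current detector state), which is exactly your choice of the subset $S_A$ of $A$-positions with the alternating sign pattern forced by nilpotency, and both then reduce to the same parity bookkeeping and binomial counting. The point you flag about coalescing the interleaved bosonic factors inside $\mathsf{T}$ is handled the same way (implicitly) in the paper, since only the multiplicities of $A_\pm,B_\pm$ matter under time ordering.
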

\begin{proof}
  The initial state $\ket{0}$ is a ground state: $\s^-\ket{0}=0$. There are two possibilities for the output states: $\ket{0}$ and $\ket{1}$ but taking into account the nilpotency property of $\s^\pm$ we can also enumerate all possible ways the output states can be obtained. It is simply
  \begin{align}\label{eq:allPaths}
    \ket{0} & = (\s^-\s^+)^{p}\ket{0}, \\
    \ket{1} & = (\s^-\s^+)^{p-1}\s^+\ket{0}
  \end{align}
  for all $p>0$. All other operator sequences result in zero. In our case we have two sets of ladder operators -- for the $A$ and $B$ Hilbert spaces and so
\begin{subequations}\label{eq:allPaths2Atoms}
  \begin{align}
    \ket{00}_{AB} & = \sum_{\genfrac{}{}{0pt}{2}{\mathrm{nonzero\ products\ s.t.}}{2p+2q=n}}(\s_A^-\s_A^+)^{p}(\s_B^-\s_B^+)^{q}\ket{00}, \label{eq:allPaths2Atoms00} \\
    \ket{11}_{AB} & = \sum_{\genfrac{}{}{0pt}{2}{\mathrm{nonzero\ products\ s.t.}}{2p+2q-2=n}}(\s_A^+\s_A^-)^{p-1}\s_A^+(\s_B^+\s_B^-)^{q-1}\s_B^+\ket{00}, \label{eq:allPaths2Atoms11} \\
    \ket{01}_{AB} & = \sum_{\genfrac{}{}{0pt}{2}{\mathrm{nonzero\ products\ s.t.}}{2p+2q-1=n}}(\s_A^-\s_A^+)^{p}(\s_B^+\s_B^-)^{q-1}\s_B^+\ket{00}, \label{eq:allPaths2Atoms01} \\
    \ket{10}_{AB} & = \sum_{\genfrac{}{}{0pt}{2}{\mathrm{nonzero\ products\ s.t.}}{2p+2q-1=n}}(\s_A^+\s_A^-)^{p-1}\s_A^+(\s_B^-\s_B^+)^{q}\ket{00}.\label{eq:allPaths2Atoms10}
  \end{align}
\end{subequations}
  In the first two lines $n$ is even and in the last two lines $n$ is odd. The sums should be understood as counting all possible strings of the sigma operators leading to a given output state on the left and we will now find the explicit expressions. Let's call them \emph{nonzero strings}. Given the desired output state,  the main observation is that there are always only two sigma operators that do not destroy the previous nonzero string. This is because every nonzero string acting on $\ket{00}$ can land in only one of the four possible states. For instance, if the state is $\ket{01}$ the next sigma operator can be $\s^+_A$ or $\s_B^-$ to get another nonzero string. Given the offspring node, the choice of either $+$ or $-$ in the branching is unambiguous in order to get a nonzero string. Similarly for the other three states occupying the node and so we can represent all nonzero strings as a complete binary tree of the depth $n$ where we adopt the following convention: The root node is the initial state $\ket{00}$ where the left offspring corresponds to the action of $\s_A^+$ and the right offspring corresponds to $\s_B^+$. The nodes are labeled by the resulting state $\ket{ij}$, in this case $\ket{01}$ or $\ket{10}$, see Fig.~\ref{fig:littlebintree}. In all the branchings that follow the left offsprings correspond to the action of $\s_A^\pm$ and the right offsprings correspond to $\s_B^\pm$.
  \begin{figure}[t]
    \resizebox{4cm}{!}{\includegraphics{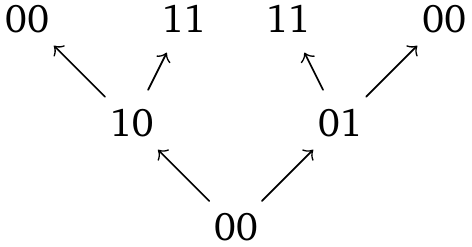}}
    \caption{All paths for the first two branchings starting from $\ket{00}_{AB}$ are depicted. The arrows indicate the action of the ladder operators $\s_{A,B}^\pm$ with $A(B)$ acting to the left(right). The resulting state is a new node.}\label{fig:littlebintree}
  \end{figure}
\begin{figure}[h]
   \resizebox{12cm}{!}{\includegraphics{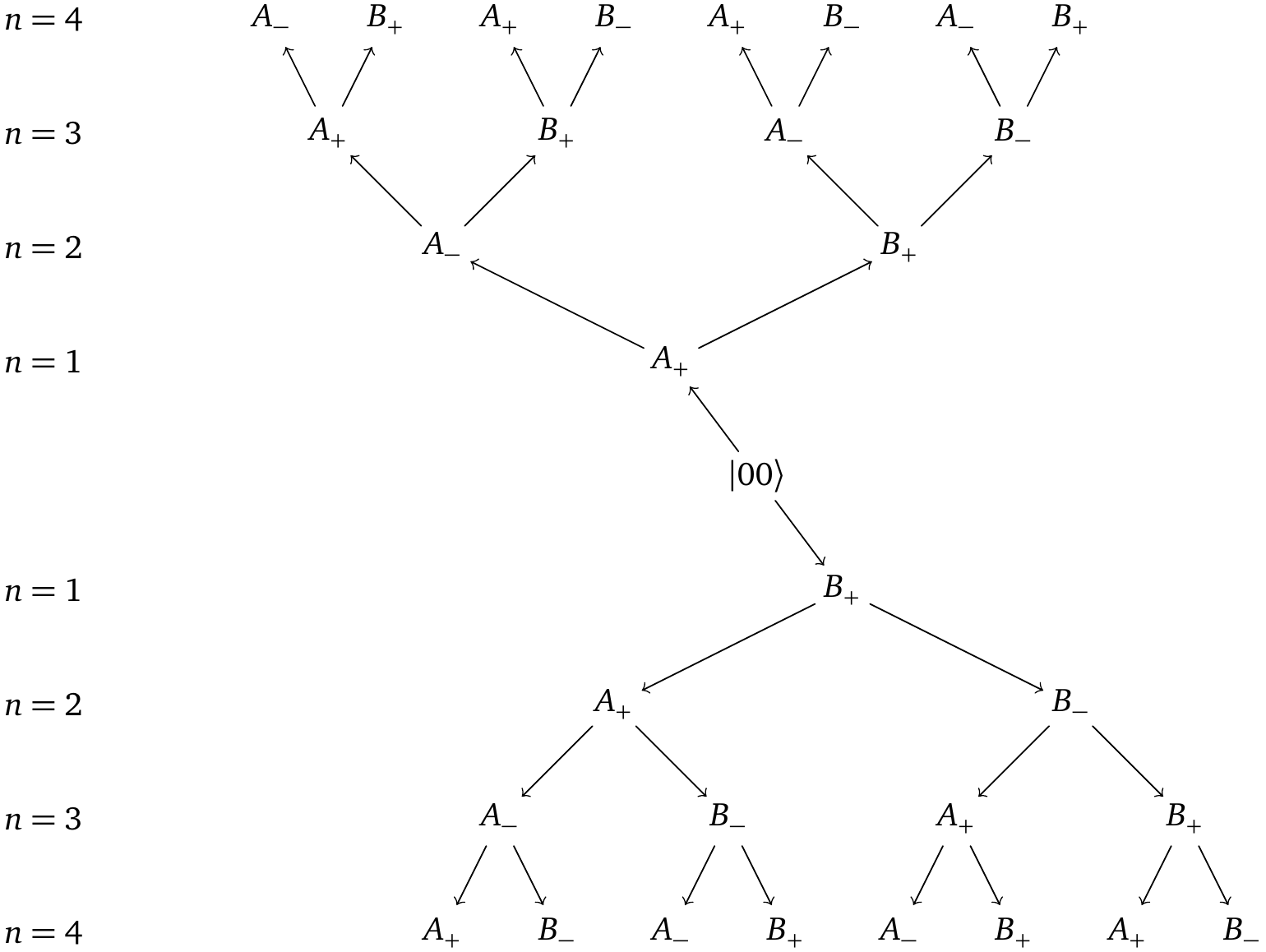}}
    \caption{A schematic depiction of the action of Eqs.~(\ref{eq:expEven}) and~(\ref{eq:expOdd}) for $n=4$ branchings. As in Fig.~\ref{fig:littlebintree} the arrows symbolize the action of the detector ladder operators $\s_{A,B}^\pm$. The left-pointing arrows correspond to the $A$ detector while the right pointing arrows are for the $B$ detector. The initial state is $\ket{00}$ and the symbols $A_\pm$ and $B_\pm$ are the boson operators `left behind' after the action of $\s_{A,B}^\pm$. So as we follow any path in the tree (up or down from $\ket{00}$) we collect the boson operators thus forming an operator sequence in Eqs.~(\ref{eq:expEven}) and~(\ref{eq:expOdd}). There are many paths with the same operator products and their multiplicities are the binomial coefficients derived in Proposition~\ref{prop:phaseI}.}
    \label{fig:bintree}
\end{figure}

  But this immediately provides the desired counting because we have just shown a bijection between the number of branches of a perfect binary tree of the depth $n$ and the number of non-zero strings of the same length. This is because each branch is uniquely represented by a nonzero string of the length $n$ and no string out of $2^n$ of them is missing.  Consider~Eq.~(\ref{eq:allPaths2Atoms00}). We set the total number of branchings $n=2m$ and $p=\ell$ giving us $q=m-\ell$. The number of paths with $2\ell$ left branchings (one $\ell$ for $\s_A^+$ and one $\ell$ for $\s_A^-$ making it even since we have to end up in $\ket{0}_A$)  out of the total $n=2m$ branchings is $\binom{2m}{2\ell}$ and so is the number of nonzero strings in~Eq.~(\ref{eq:allPaths2Atoms00}). Identically, for~(\ref{eq:allPaths2Atoms11}) we again set $n=2m$ and  $p=\ell$ and obtain $\binom{2m}{2\ell-1}$ of nonzero strings ($2\ell-1$ left branchings and $m-\ell+1$ right branchings). For Eq.~(\ref{eq:allPaths2Atoms01}) we set $n=2m+1$ and $p=\ell$. Hence $\binom{2m+1}{2\ell}$ is the total number of nonzero strings. Finally, for Eq.~(\ref{eq:allPaths2Atoms10}) we have $n=2m+1$ and we now set $q=\ell$. In this way we obtain the same number of nonzero strings $\binom{2m+1}{2\ell}$ as in the previous case. We derived the binomial coefficients in Eqs.~\eqref{eq:expEven} and~\eqref{eq:expOdd}. The sums  collect all Pauli operator products leading to the same target state $\ket{ij}$.  The proposition statement is then obtained by inspecting the LHS of Eqs.~(\ref{eq:expEven}) and~(\ref{eq:expOdd}) where the boson operators $A_\pm,B_\pm$ accompany the corresponding ladder operators and the RHS consist of the products of the boson operators from nonzero strings.
\end{proof}
\begin{rem}
    Fig.~\ref{fig:bintree} illustrates the proof. It also illustrates the fact that the same analysis can be done for any initial state $\ket{ij}$. Depending on the values of $i$ and $j$ the arrows in the binary tree will represent different ladder operators leading to the same counting and different boson operators on the RHS of  Eqs.~(\ref{eq:expEven}) and~(\ref{eq:expOdd}). In this way we are able to find all the unitary matrix elements and therefore we can act on any input state (pure or mixed) from a space spanned by $\ket{ij}_{AB}$.
\end{rem}
\begin{exa}
  Setting $m=0,1,2$ in Eq.~(\ref{eq:expEven}) and $m=0,1$ in Eq.~(\ref{eq:expOdd}) (corresponding to the perturbative order $n=4$) we quickly reproduce the fourth order calculation in~\cite{bradler2016absolutely} and the result can be applied to a variety of situations~\cite{ver2009entangling,reznik2005violating,cliche2010information,lin2006accelerated}. With the same ease we can obtain an output state for $m\gg0$.

  A non-trivial check is the trace of~(\ref{eq:omega}) being equal to one irrespective of the perturbative order. Case $n=4$ was verified in~\cite{bradler2016absolutely} and a straightforward calculation shows that it is true for $n=6$ as well\footnote{Note that the fourth order is an absolute must if one wants to calculate any entropic quantity to properly assess the importance of entanglement for quantum communication. If only the second order is calculated (almost always the case with a notable exception of~\cite{bradler2016absolutely}) then the $\kbr{11}{11}$ component of~(\ref{eq:omega}) is zero and two of the eigenvalues of $\om_{AB}$ are negative.}.
\end{exa}

We use Proposition~\ref{prop:phaseI} to insert Eqs.~(\ref{eq:expEven}) and~(\ref{eq:expOdd}) to Eq.~(\ref{eq:omega}) and get the output state components to the $n$-th order. Let's take a look at the $\kbr{00}{00}$ component:
\begin{align}\label{eq:omega00}
  \om_{AB}(\kbr{00}{00})
  &= \bra{0_M}\mathsf{T}\Big\{
    \sum_{m+m'=0}^{n}{\la^{2(m+m')}(-1)^{m+m'}\over (2m)!(2m')!}\bigg(\sum_{\ell=0}^m\binom{2m}{2\ell}\,A_+^\ell A_-^\ell B_+^{m-\ell}B_-^{m-\ell}\bigg)\nn\\
    &\quad\times
    \bigg(\sum_{\ell'=0}^{m'}\binom{2m'}{2\ell'}\,A_+^{\ell'} A_-^{\ell'} B_+^{m'-\ell'}B_-^{m'-\ell'}\bigg)^\dg
  \Big\} \ket{0_M}\nn\\
  &= \bra{0_M}\mathsf{T}{\Big\{
    \sum_{m+m'=0}^{n}{\la^{2(m+m')}(-1)^{m+m'}\over (2m)!(2m')!}
    \binom{2m}{2\ell}\binom{2m'}{2\ell'}
    \,A_+^{\ell+\ell'} A_-^{\ell+\ell'} B_+^{m+m'-\ell-\ell'}B_-^{m+m'-\ell-\ell'}
  \Big\}} \ket{0_M}\nn\\
    &= \sum_{m+m'=0}^{n}{\la^{2(m+m')}(-1)^{m+m'}\over (2m)!(2m')!}
    \binom{2m}{2\ell}\binom{2m'}{2\ell'}
    \lan1^{\ell+\ell'}2^{\ell+\ell'}3^{m+m'-\ell-\ell'}4^{m+m'-\ell-\ell'}\ran_0.
\end{align}
We identified  $A_+\rightleftharpoons1,A_-\rightleftharpoons2,B_+\rightleftharpoons3$ and $B_-\rightleftharpoons4$ and in this form it is ready for Theorem~\ref{thm:IsserlisRefined} and onto Theorem~\ref{thm:Multiplicity}. The number of terms in~Eq.~(\ref{eq:omega00}) is polynomial in $m$ and therefore in $n$ as well.

\subsection*{Hafnians and the number of perfect matchings of a graph}

Following Def.~\ref{def:graph} we introduce some additional concepts from graph theory.
\begin{defi}
  A simple undirected graph $G=(V,E)$ on $m=|V|$ vertices is called complete if all vertices are connected with each other. That is, $|E|=m(m-1)/2$ is the total number of edges. A graph which is not complete is called incomplete.
\end{defi}
\begin{defi}
  A perfect matching (or 1-factor) of a simple undirected graph $G=(V,E)$ is a graph $G_{PM}=(V',E')\subset G$ where every vertex is paired with exactly one other vertex. Hence $G$ contains a perfect matching if $|V|=m$ is even in which case $V'=V$ and $|E'|=|V|/2$.
\end{defi}
The number of perfect matchings of a graph varies, depending on the graph type. The majority of graphs are incomplete and  the calculation of the number of perfect matchings not known to be tractable. One of the very few cases where the analytical form is known is $K_m$ -- a complete graph on $m=2n$ vertices where $\#(PM_{K_{2n}})=(2n)!/(2^nn!)=(2n-1)!!$.
\begin{exa}\label{exa:completeGraphK4}
  The following picture shows all the perfect matching for $K_4$.
    \begin{figure}[h]
    \resizebox{14cm}{!}{\includegraphics{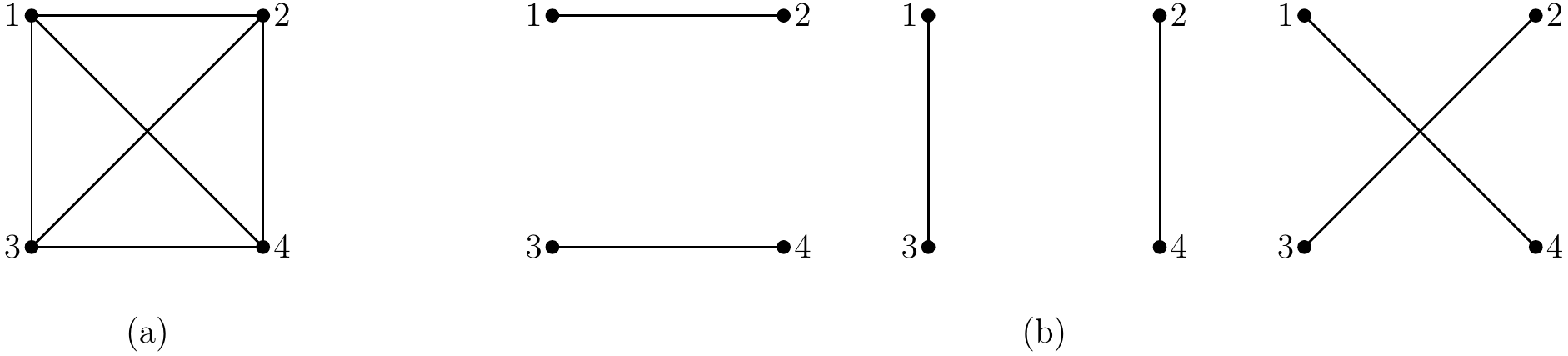}}
    \caption{(a) $K_4$: complete graph on four vertices. (b) All three perfect matchings of $K_4$.}\label{fig:K4}
  \end{figure}

\end{exa}
The picture resembles the way a four-point correlation function splits into as a result of Wick's (Isserlis') theorem. Caianiello formalized this connection~\cite{caianiello1953quantum} by introducing the hafnian of a $2n\times2n$ matrix~$A$ (here we use the definition from~\cite{minc1984permanents})
\begin{equation}\label{eq:haf}
  \haf{[A(K_4)]}=\sum_{\s}\prod_{i=1}^{2n}A(K_4)_{i,\s(i)},
\end{equation}
where the sum goes over all $(2n-1)!!$ unordered disjoint pairs of the set $\{1,\dots,2n\}$. For the above example of a complete graph on four vertices we calculate the hafnian of its adjacency matrix
$$
A(K_4)=\begin{bmatrix}
    0 & 1 & 1 & 1 \\
    1 & 0 & 1 & 1 \\
    1 & 1 & 0 & 1 \\
    1 & 1 & 1 & 0 \\
  \end{bmatrix}
$$
and find $\haf{[A(K_4)]}=1+1+1=3$ agreeing with Example~\ref{exa:completeGraphK4}. Indeed, the hafnian of any complete graph is precisely the number of summands of the products of two-point correlation functions in Wick's (Isserlis') formula, Eq.~\ref{eq:Isserlis}, as envisaged by Caianiello.

Our contribution in this section is the following. While the number of perfect matchings for a complete graph is known we may face the problem of systematically listing all of them. Following the proof of Theorem~\ref{thm:IsserlisRefined}, namely the set of Diophantine equations in~\eqref{eq:DiophantGeneral}, we set $f=2n$ and $\ell_i=1,\forall i$ and by solving the system obtain the desired perfect matchings.

\section{Generalization beyond boson scalar theories}\label{sec:beyondscalars}

Wick's theorem is at the core of any perturbative calculation and so the method based on finding all nonnegative solutions of a Diophantine system can in principle be used everywhere. But there are several, mainly technical, obstacles in extending the current formalism to more general theories. One of the was mentioned in Section~\ref{sec:bosonsInteracting} regarding the presence of derivative interactions in the Lagrangian.

Another obstacle to overcome in order to generalize the developed scheme is to consider Wick's theorem for anticommuting fields. This is a standard tool in QFT~\cite{schwartz2014quantum} where one has to keep track of the parity and change the sign whenever two fermions swap their place in a correlation function. There seems to be no equivalent of Isserlis' theorem for anticommuting variables but one can envisage a generalization of random Gaussian variables to a Gaussian integral over (anticommuting) random Grassmann variables. After all, Grassmann variables are routinely used to study the properties of fermions in the path integral formulation of QFT using \emph{similar} algebraic properties of the CAR and Grassmann algebra. Then, a generalization of our refinement of Wick's/Isserlis' theorem (Theorem~\ref{thm:IsserlisRefined}) is straightforward as well but due to the nilpotency of Grassmann variables there is really not much to generalize as the higher powers are missing.

Once we know how treat anticommuting fields we proceed in the same way by solving a system of Diophantine equations. The difference is (and this is true for any complex boson theory as well) that we have to disregard the solutions where a contraction of two fields is zero. For example, for the second order expansion of the Yukawa interaction term $\Lcal_{\mathrm{int}}=-g\psi^\dg(x)\psi(x)\phi(x)$, where $\psi$ is anticommuting, the contribution given by contracting $\psi(x)\psi(y)$ is zero. The Diophantine system, of course, treats all fields equally and so we have to exclude such contributions manually. But this can be easily done just by going through the list of Diophantine solutions and removing all those containing at least one zero propagator such as $\bra{0_M}\mathsf{T}\{\psi^\dg(x)\psi^\dg(y)\}\ket{0_M},\bra{0_M}\mathsf{T}\{\psi(x)\phi(x)\}\ket{0_M}$ or any other kinematically forbidden process. By easily we mean without an additional computational complexity cost.

Finally, one has to face the calculation of the graph automorphism order to get the right multiplicities. Already starting with complex boson theories the graphs are in general directed, distinguishing between scattering of particles and antiparticles. This will result in a rather minor modification of Lemma~\ref{lem:FeynmanMulti}. In particular, directed edges or edges of different types (different fields) will decrease the grafted graph symmetry compared to the real scalar case.

\section{Discussion and open questions}\label{sec:discussions}

Among several open problems stands out the following one. We have seen that in order to get the right multiplicity factor of a perturbative contribution one has to use a potentially computationally expensive procedure of calculating the graph automorphism order or graph isomorphism membership problem. This procedure bundles the graphs from the same class of physically indistinguishable scattering processes. Recall that they are represented by different solutions of a Diophantine system and every such a solution is an interior point of a certain convex polyhedron. Could it be that solutions from the same class form an exceptional subset in the polyhedron (perhaps even with nice geometric properties) or are they scattered randomly? In the former case we could use such knowledge to avoid the potentially costly  calculation of the graph isomorphisms (costly for large graphs in extremely high perturbative orders).

A closely related question is whether the study of the contributions from one perturbative order (for a given Lagrangian and a fixed number of interacting fields) tells us something about the structure of the contributions from higher perturbative orders. All nonnegative solutions come from a Diophantine system whose  size  (the number of interior lattice points) grows but its structure is very similar across the perturbative orders. Does geometry of the corresponding convex polyhedra have anything in common across perturbative orders? If the answer is affirmative, could a new insight be gained into how to sum the related perturbative contributions?

Another outstanding problem is a generalization of the major application of our method: an efficient perturbative calculation of the interaction Hamiltonian for a pair of Unruh-DeWitt (UDW) detectors in Minkowski spacetime. The result is independent on the detector details and works for any smearing, envelope function or trajectory by efficiently  decomposing a generic $2n$-point correlation function to a product of two-point Green's functions where the detector details play a role. An efficient calculation means that the number of perturbative terms increases polynomially with the perturbative order of the coupling constant. It would be interesting to generalize the proofs for any number of Unruh-DeWitt detectors to study the multipartite structure of Minkowski and other spacetimes.

A related open problem is to clarify whether the UDW perturbative expansion suffers from the convergence issues. Everything suggests that the series is not of an asymptotic character. It is unlikely but not entirely impossible that the density matrix components or even the unitary entries themselves are actually analytically summable. To answer these questions one could perhaps first look at single UDW detector coupled to Minkowski vacuum as the simplest case. Another interesting question is what happens with the divergencies in the momentum space after all perturbative components are added. Is renormalization still necessary as claimed in~\cite{hummer2016renormalized} or do the divergencies cancel out?

\section*{Acknowledgements}
This material is based upon work supported by the Air Force Office of Scientific Research under award number FA9550-17-1-0083.

\bibliographystyle{unsrt}


\end{document}